\documentclass[sigconf, nonacm]{acmart}

\newif\ifarxiv
\arxivtrue 
\newcommand{\shortlong}[2]{{\ifarxiv{#2}\else{#1}\fi}}

\ifarxiv
\allowdisplaybreaks
\fi

\ifarxiv
\else

\newcommand\vldbdoi{XX.XX/XXX.XX}
\newcommand\vldbpages{XXX-XXX}
\newcommand\vldbvolume{14}
\newcommand\vldbissue{1}
\newcommand\vldbyear{2020}
\newcommand\vldbauthors{\authors}
\newcommand\vldbtitle{\shorttitle} 
\newcommand\vldbavailabilityurl{https://github.com/cmla-psu/QuerySmasherMatrixMechanism}
\newcommand\vldbpagestyle{plain} 

\fi

\usepackage{color}
\usepackage{xargs} 
\usepackage{xspace} 
\usepackage[createShortEnv,conf={restate,no link to proof}]{proof-at-the-end}

\usepackage[linesnumbered,ruled,vlined]{algorithm2e}
\usepackage{hyperref}

\usepackage{enumitem}
\usepackage{bbm}

\newtheorem{theorem}{Theorem}

\newtheorem{definition}{Definition}

\DeclareMathAlphabet{\mathdutchcal}{U}{dutchcal}{m}{n}
\SetMathAlphabet{\mathdutchcal}{bold}{U}{dutchcal}{b}{n}
\DeclareMathAlphabet{\mathdutchbcal}{U}{dutchcal}{b}{n}

\def\notationcolor{black} 
\newcommand{\notation}[2]{\newcommand{#1}{{\textcolor{\notationcolor}{\ensuremath{#2}}}}}

\newcommand{\term}[2]{\newcommand{#1}{\textcolor{\notationcolor}{#2}\xspace}}

\newcommand{\mat}[1]{\mathbf{#1}}
\newcommand{\bvec}[1]{\mathbf{#1}}

\newcommand{\mquery}[1]{\textcolor{\notationcolor}{Q^{(marg)}_{#1}}}

\notation{\outp}{\omega}
\notation{\resid}{\text{Resid}} 
\notation{\mech}{\mathcal{M}} 
\notation{\range}{range} %
\notation{\pcost}{pcost}
\notation{\randalg}{\mathcal{G}}
\notation{\weight}{\text{weight}}
\notation{\recon}{\text{Recon}}
\notation{\varfun}{\text{Var}}
\notation{\trace}{\text{trace}}
\notation{\pcostbound}{\beta}

\notation{\data}{\mathcal{D}} 
\notation{\datavec}{\vec{\mathbf{x}}}
\notation{\attr}{A} 
\notation{\attv}{a} 
\notation{\attsize}{d} 
\notation{\numattr}{k} 
\notation{\rec}{r} 
\notation{\hot}{\mathbbm{e}} 
\notation{\numrec}{n} 
\notation{\rcmatrix}{\mathfrak{R}}
\notation{\kron}{\otimes}
\notation{\Kron}{\bigotimes}
\notation{\marginal}{\bvec{T}}
\notation{\attset}{\mathcal{A}}
\notation{\imag}{\mathbbmtt{i}}
\notation{\basis}{\mathbf{\mathdutchcal{f}}}
\notation{\extend}{\text{extend}}
\notation{\mopt}{\text{MOpt}}

\notation{\covar}{\mat{\Sigma}}
\notation{\zero}{\bvec{0}}
\notation{\bmat}{\mat{B}}
\notation{\one}{\bvec{1}}
\notation{\identity}{\mat{\mathcal{I}}}

\notation{\query}{\bvec{q}}
\notation{\queryv}{\vec{\query}}
\notation{\Query}{\mat{Q}}
\notation{\workload}{\text{WKLoad}}
\notation{\support}{\text{support}}
\notation{\deco}{\Rightarrow}

\notation{\profile}{\mathcal{V}}
\notation{\loss}{\mathcal{L}}

\term{\longsysname}{Multi-basis Residual Planner}
\term{\sysname}{QuerySmasher}

\begin{document}
\title{Accurate and Scalable Matrix Mechanisms via Divide and Conquer}

\author{Guanlin He}
\affiliation{%
  \institution{Penn State University}
}
\email{gbh5146@psu.edu}

\author{Yingtai Xiao}
\affiliation{%
  \institution{TikTok}
}
\email{yingtai0323@gmail.com}

\author{Jiamu Bai}
\affiliation{%
  \institution{Penn State University}
}
\email{jvb6867@psu.edu}

\author{Xin Gu}
\affiliation{%
  \institution{Penn State University}
}
\email{xingu@psu.edu}

\author{Zeyu Ding}
\affiliation{%
  \institution{Binghamton University}
}
\email{dding1@binghamton.edu}


\author{Wenpeng Yin}
\affiliation{%
  \institution{Penn State University}
}
\email{wenpeng@psu.edu}

\author{Daniel Kifer}
\affiliation{%
  \institution{Penn State University}
}
\email{dkifer@cse.psu.edu}

\begin{abstract}
Matrix mechanisms are often used to provide unbiased differentially private
query answers when publishing statistics or creating synthetic data.
Recent work 
has developed matrix mechanisms, such as ResidualPlanner and Weighted Fourier Factorizations, 
that scale to high dimensional
datasets while providing optimality guarantees for workloads such as
marginals and circular product queries. They operate by adding noise
to a linearly independent set of queries that can compactly represent
the desired workloads.

In this paper, we present \sysname, an alternative scalable approach based on
a divide-and-conquer strategy. Given a workload that can be answered
from various data marginals, \sysname splits each query into subqueries
and re-assembles the pieces into mutually  orthogonal subworkloads.
These subworkloads represent
small, low-dimensional problems that can be independently and optimally answered by existing low-dimensional 
matrix mechanisms.
\sysname then stitches these solutions together to answer queries in the original workload.

We show that \sysname subsumes prior work, like ResidualPlanner (RP), ResidualPlanner+ (RP+),
and Weighted Fourier Factorizations (WFF). We prove that it can dominate those approaches, under sum squared error,
for all workloads. We also experimentally demonstrate the scalability and accuracy
of \sysname.

\end{abstract}

\maketitle

\ifarxiv 
\else

\pagestyle{\vldbpagestyle}
\begingroup\small\noindent\raggedright\textbf{PVLDB Reference Format:}\\
\vldbauthors. \vldbtitle. PVLDB, \vldbvolume(\vldbissue): \vldbpages, \vldbyear.\\
\href{https://doi.org/\vldbdoi}{doi:\vldbdoi}
\endgroup
\begingroup
\renewcommand\thefootnote{}\footnote{\noindent
This work is licensed under the Creative Commons BY-NC-ND 4.0 International License. Visit \url{https://creativecommons.org/licenses/by-nc-nd/4.0/} to view a copy of this license. For any use beyond those covered by this license, obtain permission by emailing \href{mailto:info@vldb.org}{info@vldb.org}. Copyright is held by the owner/author(s). Publication rights licensed to the VLDB Endowment. \\
\raggedright Proceedings of the VLDB Endowment, Vol. \vldbvolume, No. \vldbissue\ %
ISSN 2150-8097. \\
\href{https://doi.org/\vldbdoi}{doi:\vldbdoi} \\
}\addtocounter{footnote}{-1}\endgroup

\ifdefempty{\vldbavailabilityurl}{}{
\vspace{.3cm}
\begingroup\small\noindent\raggedright\textbf{PVLDB Artifact Availability:}\\
The source code, data, and/or other artifacts have been made available at \url{https://github.com/cmla-psu/QuerySmasherMatrixMechanism}.
\endgroup
}

\fi

\section{Introduction}\label{sec:intro}
Confidential datasets about individuals and businesses provide a wealth of information
that is useful for scientific, economic, and policy purposes. Clearly, they cannot be
released without modification and so a variety of perturbed summary statistics are released
instead. The fundamental theorem of database reconstruction \cite{dinur:privacy}
demonstrates limits on how many approximate query answers can be released before
privacy breaks down under any reasonable notion, and differential privacy \cite{dwork06Calibrating,zcdp,renyidp,fdp}
has become a de facto standard for navigating the privacy/utility tradeoff.

Although the  datasets can be high-dimensional, the queries of interest \cite{census2020,qcewdata,acsdata} often involve a small set of variables at a time. Hence, the queries of interest can be answered from a collection of data marginals. For example, age range queries for each race can be answered by a 2-way marginal on age and race, 2-d range queries on age and income for each gender can be answered by a 3-way marginal on age, gender, and income, etc.

These are examples of \emph{linear queries}, and they can be answered using a class of differentially private mechanisms known as \emph{matrix mechanisms} \cite{LMHMR15,YZWXYH12,YYZH16,edmonds2020power,mckenna2018optimizing,rplanner,rplus,lebeda2025weightedfourierfactorizationsoptimal}. Given a workload of linear queries, these methods create a set of \emph{strategy queries} that can be answered with low noise. The original workload queries can then be answered by linear postprocessing of the strategy query answers. This is a powerful framework that has been used in the 2020 Census disclosure avoidance system \cite{tdahdsr} and is also a key element of more complex algorithms \cite{zhang2018ektelo,aim,privbayes,fuentes2026fast,mullins2024efficient,mckenna2025scaling}.

One of the key challenges of the matrix mechanism line of work is scalability to high-dimensional datasets. If a dataset has attributes $\attr_1,,\dots, \attr_\numattr$ with domains sizes $\attsize_1,\dots, \attsize_\numattr$, then optimizing a matrix mechanism to minimize the sum squared error of an \emph{arbitrary} query workload  can be formulated as a semi-definite programming problem with $O\left(\left(\prod_{i=1}^\numattr d_i\right)^2\right)$ variables \cite{YYZH16}. Solving such an optimization problem is only feasible for low-dimensional datasets that have a small number of attributes. 
However, most applications require answers to \emph{structured} query workloads rather than arbitrary workloads.
Specifically, each linear query accesses a small number of attributes (different queries can access different attributes) and so the entire workload can be computed
using a collection of lower dimensional marginals \cite{census2020,qcewdata,acsdata}.
McKenna  et al. \cite{mckenna2018optimizing} first showed how to scale matrix mechanism construction  by exploiting such structure. However, their algorithm still required $O\left(\prod_{i=1}^\numattr d_i\right)$ space, which becomes a bottleneck for scalability.

A recent line of work called ResidualPlanner (RP) \cite{rplanner}, ResidualPlanner+ (RP+) \cite{rplus}, and Weighted Fourier Factorizations (WFF) \cite{lebeda2025weightedfourierfactorizationsoptimal} overcame this bottleneck by proposing matrix mechanisms whose time and space complexity scales with the output size (i.e., size of the marginals needed to answer the workload queries) rather than the exponentially sized data domain. This is enabled by special types of linearly independent queries, such as the residual basis \cite{rplanner} or multidimensional Fourier basis \cite{lebeda2025weightedfourierfactorizationsoptimal}, that can compactly represent any collection of marginals and therefore they can compactly represent any linear workload that is computable from collections of marginals (such as range queries, prefix-sum queries, etc.).
Surprisingly, a proper choice of basis also guarantees optimality (among matrix mechanisms) for expressive classes of queries like marginals \cite{rplanner} or, more generally, circular product queries \cite{lebeda2025weightedfourierfactorizationsoptimal}.

In this paper, we present \sysname, a novel approach to matrix mechanism scalability that is provably more accurate than this prior work \cite{rplanner,rplus,lebeda2025weightedfourierfactorizationsoptimal} under the weighted sum squared error metric.
It breaks apart each query into subqueries and groups those pieces into subworkloads. These subworkloads are mutually orthogonal, so that matrix mechanisms for each subworkload can be constructed independently. The subworkloads are also
low dimensional, so that optimal low-dimensional techniques \cite{YYZH16} can be used to generate the matrix mechanisms for each subworkload. Afterwards, these matrix mechanisms are combined together to create a mechanism for the original workload.
This divide-and-conquer approach ensures scalability. \sysname can also be viewed from an alternate lens of automatically searching for a good workload-dependent compact linear basis that can represent marginals, instead of always using the residual basis \cite{rplanner} or Fourier basis \cite{lebeda2025weightedfourierfactorizationsoptimal}. This ensures high accuracy.

To summarize, our contributions are the following:
\begin{itemize}[leftmargin=*]
\item We propose a new matrix mechanism, called \sysname, for computing privacy-preserving answers to any workload of linear queries that are computable from sets of marginals. \sysname uses a novel query decomposition approach to achieve both scalability and high accuracy. 
\item We mathematically prove that \sysname can dominate the scalable compact-linear-basis approaches RP \cite{rplanner}, RP+ \cite{rplus}, and WFF \cite{lebeda2025weightedfourierfactorizationsoptimal}, in terms of weighted sum squared error, for every workload. Specifically, if the optimal low-dimensional matrix mechanism construction \cite{yuan2015optimizing} for each subworkload is replaced by approximate methods, the result becomes mathematically equivalent to RP, RP+, and WFF. 
\item We empirically evaluate \sysname against competing scalable matrix mechanisms  \cite{rplanner,rplus,lebeda2025weightedfourierfactorizationsoptimal,mckenna2018optimizing} to show that its theoretical accuracy and scalability advantages translate into practice.
\end{itemize}

This paper is organized as follows. In Section \ref{sec:notation}, we introduce our notation. In Section \ref{sec:background} we explain background on differential privacy and matrix mechanisms. In Section \ref{sec:related}, we discuss related work. We present an intuitive overview of \sysname in Section \ref{sec:overview} and the technical details in Section \ref{sec:decomp}. In Section \ref{sec:accuracy}, we prove that \sysname dominates RP \cite{rplanner}, RP+ \cite{rplus}, and WFF \cite{lebeda2025weightedfourierfactorizationsoptimal}, in terms of sum squared error, for every workload. We present experiments in Section \ref{sec:experiments} and conclusions in Section \ref{sec:conc}. Due to space constraints, proofs can be found in the \shortlong{full version of this paper \cite{qsarxiv}.}{appendix.}

\section{Notation}\label{sec:notation}
We use the following linear algebra notation.
Let $\kron$ be the Kronecker product; e.g., if the matrix 
$\mat{V}=\left(\begin{smallmatrix}
v_{0,0} & v_{0,1} & v_{0,2}\\
v_{1,0} & v_{1,1} & v_{1,2}\\
v_{2,0} & v_{2,1} & v_{2,2}\\
\end{smallmatrix}\right)$ and $\mat{W}$ is another matrix then $\mat{V}\kron\mat{W}$ 
can be written as the block matrix $\left(\begin{smallmatrix}
v_{0,0}\mat{W} & v_{0,1}\mat{W} & v_{0,2}\mat{W}\\
v_{1,0}\mat{W} & v_{1,1}\mat{W} & v_{1,2}\mat{W}\\
v_{2,0}\mat{W} & v_{2,1}\mat{W} & v_{2,2}\mat{W}\\
\end{smallmatrix}\right)$.
Let $\identity_d$ be the $d\times d$ identity matrix.
Let $\one_d$ be the d-dimensional column vector whose entries are all 1.

\paragraph*{\textbf{Data.}}
Let $\attr_1,\dots, \attr_\numattr$ be  attributes having discrete domains.
For $i=1,\dots,\numattr$, let $\attsize_i$ denote the domain size of attribute $\attr_i$ and let its domain
elements be $\attv^{(i)}_{0},\attv^{(i)}_{1}, \dots, \attv^{(i)}_{\attsize_i-1}$.
A value $\attv^{(i)}_j$ for attribute $\attr_i$ can also be represented in one-hot notation
as a $\attsize_i$-dimensional column vector $\hot^{(i)}_j$,  where the $j^\text{th}$ component is $1$
and all others are $0$. We index vectors using square brackets (e.g.,  $\hot^{(i)}_j[j]=1$).

A record $\rec$ on attributes $\attr_1,\dots, \attr_\numattr$ is a tuple that specifies a value
for each attribute. It can also be represented as a vector of dimension $\prod_{i=1}^\numattr \attsize_i$ by taking the Kronecker product $\kron$ of the one-hot encodings of its attribute values. For example, if the domain of $\attr_1$ is $\{\textit{low}, \textit{medium}, \textit{high}\}$, the domain of $\attr_2$ is $\{\textit{yes}, \textit{no}\}$, and the domain of $\attr_3$ is $\{0, 1, 2, 3, 4\}$, then the record $\rec=(\textit{medium}, \textit{yes}, 3)$
can be represented as  $\hot^{(1)}_1 \kron \hot^{(2)}_0 \kron \hot^{(3)}_3 = [0, 1, 0]^\top \kron [1, 0] ^\top\kron [0, 0, 0, 1, 0]^\top$.

A dataset $\data=\{\rec_1,\dots, \rec_\numrec\}$ is a set of records, where each record $\rec_i$ 
corresponds to one individual. In addition to this set view, a dataset $\data$ can also be viewed as a 
tensor/multidimensional array $\marginal$, in which the entry $\marginal[j_1, j_2,\dots, j_\numattr]$ is the
number of records for which $\attr_1=\attv^{(1)}_{j_1}, \attr_2=\attv^{(2)}_{j_2},\dots, \attr_\numattr=\attv^{(\numattr)}_{j_\numattr}$. A dataset $\data$ can also be represented as a \textbf{column} vector $\datavec$ which is the sum of the vector representations of the records in the dataset. There is a simple relationship between the tensor view $\marginal$ and vector view $\datavec$: if $\marginal$ is stored in row-major order (the default in numpy) then $\datavec=\marginal$.flatten() in Python.

\paragraph{\textbf{Marginals.}} 
Given a subset  $\attset=\{\attr_{i_1},\dots,\attr_{i_\ell}\}$ of attributes, a marginal on $\attset$ can also be represented in 3 different ways: as a set, as a tensor, and as a vector. The set view $\data_\attset$ is obtained from $\data$ by projecting on the attributes in $\attset$ (i.e., removing attributes not in $\attset$ from each record). If $\attset=\{\}$ is the empty set,  then $\data_{\{\}}=\{\perp, \perp, \dots, \perp\}$ and therefore the only information it retains is the total number of records. The tensor view $\marginal_{\attset}$ is obtained from $\data_{\attset}$ in the obvious way. That is, $\marginal_\attset$ is created from $\data_\attset$ in the same way as $\marginal$ is created from $\data$. Equivalently, if $\attset=\{\attr_{i_1},\dots,\attr_{i_\ell}\}$, then $\marginal_\attset$ is created from $\marginal$ by summing up over all dimensions that are not in $\{i_1, i_2,\dots, i_\ell\}$. In the special case of $\attset=\{\}$ then $\marginal_{\{\}}=\numrec$ (the number of records). Finally, the vector view $\datavec_\attset$ is obtained  by flattening $\marginal_\attset$. Once again, when $\attset=\{\}$ then $\datavec_{\{\}}=\numrec$.

\paragraph{\textbf{Linear Queries.}}
A linear query over a marginal can be represented as a multidimensional array $\query$ if we want to use it with the tensor view of a marginal. It can also be represented as a \textbf{row} vector $\queryv$ if we want to use it with the vector view of a marginal.  If $\marginal_\attset$ is a marginal with the same shape as $\query$ then the query answer is calculated by taking the element-wise product between $\query$ and $\marginal_\attset$ and then computing the sum (in the notation of numpy: $(\query * \marginal_\attset)$.sum()). In the vector view $\queryv$, we have $\queryv=\query$.flatten() and the query answer can be computed as the dot product $\queryv \cdot \datavec_\attset$. If all entries of $\query$ (or $\queryv$) are in $\{0, 1\}$, then we call it a \textbf{counting query}. A set of queries is called a \emph{workload} and is denoted by $\workload$. 

For convenience, our notation is summarized in Table \ref{tab:notation}.
\begin{table}[h]
\begin{center}
\caption{Table of Notation}\label{tab:notation}
\begin{tabular}{|lp{0.7\linewidth}|}\hline
$\data$: &Dataset of records $\{\rec_1,\dots, \rec_{\numrec}\}$.\\
$\numrec$: & number of records in $\data$\\
$\numattr$: & The number of attributes in the data schema.\\
$\attr_i$: & Attribute $i$ of the data schema. \\
$\attv^{(i)}_j$: & the $j^\text{th}$ domain element of attribute $\attr_i$.\\
$\attsize_i$: & Size of domain of Attribute $\attr_i$.\\
$\marginal$: & Tensor view of the data.\\
$\datavec$: & Vector view of the data (\textbf{column} vector)\\
$\attset$: & A subset of attributes.\\
$\data_\attset$: & Projection of $\data$ onto attributes in $\attset$.\\
$\marginal_\attset$: & Tensor view of $\data_\attset$.\\
$\datavec_\attset$: & Vector view of $\data_\attset$.\\
$\query$: & Linear query represented as a tensor.\\
$\queryv$: & Linear query represented as a \textbf{row} vector.\\
$\workload$: & A workload of queries.\\
$\mech$: & A privacy mechanism.\\
\hline
\end{tabular}
\end{center}
\end{table}
\section{Background}\label{sec:background}

\noindent\textbf{Differential Privacy.}\\
\noindent Differential privacy refers to a family of privacy definitions whose goal is to provide plausible deniability about the presence and contents of any record in a dataset. The most popular versions that support privacy accounting for Gaussian noise are: $(\epsilon,\delta)$-DP \cite{dworkKMM06:ourdata} (also known as approximate DP), Gaussian DP \cite{fdp}, zCDP \cite{zcdp}, and Renyi DP \cite{renyidp}. \sysname supports all of these definitions, but for simplicity, we focus on approximate DP and Gaussian DP.

The goal of these privacy definitions is to add noise into a mechanism's computation so that any two neighboring datasets $\data$ and $\data^\prime$ (i.e., datasets differing on the presence/absence of 1 record) become nearly indistinguishable. That is, if a mechanism $\mech$ is run on either $\data$ or $\data^\prime$, an attacker that observes the output of $\mech$ would have difficulty in determining which of the two databases was the input. Approximate and Gaussian DP only differ in the mathematical conditions they use to enforce indistinguishability:
\begin{definition}[Approximate DP \cite{dworkKMM06:ourdata}]
    Given privacy parameters $\epsilon\geq 0$ and $\delta\in[0,1]$, a mechanism $\mech$ satisfies $(\epsilon,\delta)$-DP if for all pairs of neighbors $\data,\data^\prime$ and all $S\subseteq\range(\mech)$, 
    $$P(\mech(\data)\in S)\leq e^\epsilon P(\mech(\data^\prime)\in S)+\delta$$
\end{definition}

\begin{definition}[Gaussian DP \cite{fdp}]
    Given privacy parameter $\mu\geq 0$, a mechanism $\mech$ satisfies $\mu$-Gaussian DP if for all pairs of neighbors $\data,\data^\prime$ and all $S\subseteq\range(\mech)$, 
    $$\Phi^{-1}(P(\mech(\data)\in S)\leq \Phi^{-1}(P(\mech(\data^\prime)\in S))+\mu$$
    where $\Phi$ is the CDF of the standard Gaussian.
\end{definition}

These privacy definitions can be satisfied by the Correlated Gaussian Linear Mechanism, defined as follows.
\begin{definition}[Correlated Gaussian Linear Mechanism, Privacy Cost \cite{commonmech}]\label{def:glm}
    Given a dataset represented as a vector $\datavec$ of counts, a Correlated Gaussian Linear Mechanism has the form $\mech(\datavec)=\bmat \datavec + N(\zero, \covar)$, where $\bmat$ is a  matrix referred to as the \textbf{strategy matrix} and $\covar$ controls the covariance of the Gaussian noise added to the vector of query answers $\bmat\datavec$. The \textbf{privacy cost} of $\mech$ is defined as $\max_i (\bmat^\top \covar^{-1}\bmat)[i,i]$ (i.e., largest diagonal entry of $\bmat^\top \covar^{-1}\bmat$).
\end{definition}

Computing the approximate DP or Gaussian DP privacy parameters for a collection of correlated Gaussian linear mechanisms is done using the following result:
\begin{theorem}[\cite{commonmech}]\label{thm:overall}
    Let $\mech_1,\dots, \mech_k$ be a collection of correlated Gaussian linear mechanisms with respective strategy matrices $\bmat_1,\dots, \bmat_k$ and noise covariance matrices $\covar_1,\dots, \covar_k$. Define the overall privacy cost as $\pcost(\mech_1,\dots,\mech_k)=\max_i \left(\sum_{j=1}^k \bmat^\top_j \covar_j^{-1}\bmat_j\right)[i,i]$. If $\pcost(\mech_1,\dots,\mech_k)\leq \pcostbound$ then:
    \begin{itemize}[leftmargin=*]
        \item Releasing the outputs of $\mech_1,\dots, \mech_k$ collectively satisfies $(\epsilon,\delta)$-DP with $\delta=\Phi(\sqrt{\pcostbound}/2-\epsilon/\sqrt{\pcostbound})-e^\epsilon \Phi(-\sqrt{\pcostbound}/2-\epsilon/\sqrt{\pcostbound})$.
        \item Releasing the outputs of $\mech_1,\dots, \mech_k$ collectively satisfies $\mu$-Gaussian DP with $\mu=\sqrt{\pcostbound}$.
    \end{itemize}
\end{theorem}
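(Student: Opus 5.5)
The plan is to collapse the $k$ mechanisms into a single correlated Gaussian mechanism and then reduce to the standard privacy analysis of the Gaussian mechanism, i.e., a one-dimensional hypothesis test between two Gaussians.

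\emph{Step 1 (stacking).} Releasing the outputs of $\mech_1,\dots,\mech_k$ with independent noises is the same as releasing $\bmat\datavec + N(\zero,\covar)$. Here $\bmat$ is the vertical stack of $\bmat_1,\dots,\bmat_k$, and $\covar$ is block diagonal with blocks $\covar_1,\dots,\covar_k$. Then $\bmat^\top\covar^{-1}\bmat=\sum_j \bmat_j^\top\covar_j^{-1}\bmat_j$, so the overall privacy cost is the privacy cost of this single mechanism. It therefore suffices to treat $k=1$.

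\emph{Step 2 (sensitivity in the Mahalanobis norm).} Neighbors $\data,\data^\prime$ differ by one record, so $\datavec-\datavec^\prime=\pm \hot$ for some standard basis vector $\hot$ (the one-hot record vector). The output distributions are $N(\bmat\datavec,\covar)$ and $N(\bmat\datavec^\prime,\covar)$. Whitening by $\covar^{-1/2}$, which is an invertible post-processing, gives two isotropic Gaussians with common identity covariance. Their means differ by $\covar^{-1/2}\bmat\hot$, whose squared Euclidean length is $\hot^\top\bmat^\top\covar^{-1}\bmat\hot=(\bmat^\top\covar^{-1}\bmat)[i,i]\le\pcostbound$. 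If $\covar$ is singular, I would handle it by restricting to its range, assuming $\bmat$ maps into that range; otherwise the cost is infinite.

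\emph{Step 3 (reduction to one dimension).} By rotation invariance of the isotropic Gaussian, distinguishing the two whitened distributions is equivalent to distinguishing $N(0,1)$ from $N(\mu^\prime,1)$ with $\mu^\prime=\|\covar^{-1/2}\bmat\hot\|\le\sqrt{\pcostbound}$. The coordinate orthogonal to the mean difference is identically distributed under both and is independent, so it is uninformative; formally, the privacy loss random variable depends only on the projection onto the mean difference. The tradeoff function is then $G_{\mu^\prime}$, which gives $\mu^\prime$-GDP and hence $\sqrt{\pcostbound}$-GDP, since $G_\mu$ is monotone in $\mu$ and the guarantee must hold uniformly over all neighbors, i.e., all $i$. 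For $(\epsilon,\delta)$-DP, I would invoke the exact privacy profile of the Gaussian mechanism (Balle--Wang, or the GDP-to-DP conversion of Dong et al.): $\delta(\epsilon)=\Phi(\mu/2-\epsilon/\mu)-e^\epsilon\Phi(-\mu/2-\epsilon/\mu)$, evaluated at $\mu=\sqrt{\pcostbound}$.

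\emph{Main obstacle.} The delicate point is monotonicity: the bound is stated at $\sqrt{\pcostbound}$, while the actual shift may be smaller. I would verify that $\delta(\epsilon)$ is increasing in $\mu$, either by a direct derivative computation (the derivative is $\phi(\mu/2-\epsilon/\mu)>0$, using the identity $e^\epsilon\phi(-\mu/2-\epsilon/\mu)=\phi(\mu/2-\epsilon/\mu)$) or by noting that $N(0,1)$ versus $N(\mu^\prime,1)$ is a post-processing of $N(0,1)$ versus $N(\mu,1)$ when $\mu^\prime\le\mu$. I would also make sure the definition of neighbors (add/remove) matches the one-hot difference used in Step 2.
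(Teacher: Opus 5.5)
The paper does not prove this statement; it cites it from \cite{commonmech}, so there is no in-paper argument to compare against. Your proof is correct and follows the standard route. Stacking with block-diagonal noise makes $\sum_j\bmat_j^\top\covar_j^{-1}\bmat_j$ the cost matrix of a single mechanism; this is the same stacking-and-whitening construction the paper uses in its proof of Theorem~\ref{thm:accuracy}. Add/remove neighbors then give a whitened mean shift $\covar^{-1/2}\bmat\hot$ of squared length $(\bmat^\top\covar^{-1}\bmat)[i,i]\le\pcostbound$. Rotation invariance reduces the problem to $N(0,1)$ versus $N(\mu^\prime,1)$.

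A few small remarks. Your ``main obstacle'' is not really one. Once you have $\sqrt{\pcostbound}$-GDP, the GDP-to-DP duality of Dong, Roth, and Su applies: a mechanism is $\mu$-GDP if and only if it is $(\epsilon,\delta_\mu(\epsilon))$-DP for every $\epsilon\ge 0$. This gives the stated $\delta$ at $\mu=\sqrt{\pcostbound}$ with no separate monotonicity check. Your derivative $\partial_\mu\delta=\phi(\mu/2-\epsilon/\mu)>0$ and the kernel $x\mapsto(\mu^\prime/\mu)x+N\bigl(0,1-(\mu^\prime/\mu)^2\bigr)$ are both correct anyway.

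The paper phrases GDP via events $S$ rather than tradeoff functions. To connect the two, note that by Neyman--Pearson the extremal events for $N(0,1)$ versus $N(\mu^\prime,1)$ are half-lines. On these, the values of $\Phi^{-1}$ at the two probabilities differ by exactly $\mu^\prime$.

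Finally, your caution about neighbors is warranted. The paper does use presence/absence of one record, which is what gives $\datavec-\datavec^\prime=\pm\hot$. Under substitution, the shift would be $\covar^{-1/2}\bmat(\hot-\hot^\prime)$ for two record vectors, whose squared length can reach $4\pcostbound$, and the stated bounds would fail.
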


An important feature of differential privacy and its variations is postprocessing invariance. If $\mech$ satisfies $(\epsilon,\delta)$-DP (resp., $\mu$-Gaussian DP) and $\randalg$ is a possibly randomized algorithm whose domain contains the range of $\mech$, then the algorithm $\randalg\circ\mech$ (which outputs $\randalg(\mech(\data))$) satisfies $(\epsilon,\delta)$-DP (resp., $\mu$-Gaussian DP) with the same values of the privacy parameters.

\vspace{0.5em}\noindent\textbf{Matrix Mechanisms.}\\
The matrix mechanism framework \cite{LMHMR15,edmonds2020power} uses the correlated Gaussian linear mechanism\footnote{There also exist matrix mechanisms that use Laplace noise \cite{LMHMR15,YZWXYH12}.} to answer a workload $\workload$ of linear queries. The workload is typically represented as a matrix $\mat{W}$, whose rows are the vector representations of queries in $\workload$. The weight of each query $\queryv$, denoted as $\weight(\queryv)$, indicates its importance. The \emph{selection} step chooses matrices $\mat{L}$, $\bmat$, and covariance matrix $\covar$ with the condition that $\mat{W}=\mat{L}\bmat$. The \emph{measure} step uses the correlated Gaussian linear mechanism determined by the strategy matrix $\bmat$ and covariance matrix $\covar$ to get noisy
answers $\vec{z}$ to the strategy queries: $\vec{z}=\bmat\datavec + N(\zero, \covar)$. The reconstruction phase computes $\mat{L}\vec{z}$, which are noisy answers to the original workload queries $\mat{W}$. These noisy answers have the following properties:
\begin{itemize}[leftmargin=*]
    \item $E[L\vec{z}]=\mat{W}\datavec$, so that the reconstructed answers are unbiased.
    \item Covar$[L\vec{z}]=\mat{L}\covar\mat{L}^\top$. Hence the covariance between the $i\text{th}$ query and $j^\text{th}$ query is the $(i,j)$ entry in this matrix and the $(i,i)$ entry is the variance of the $i^\text{th}$ query.
    \item The privacy cost is: $\max_i (\bmat^\top\covar^{-1}\bmat)[i,i]$.
\end{itemize}
Hence to minimize the weighted sum squared error of the reconstructed query answers, the selection phase should set $\mat{L}=\mat{W}\bmat^\dagger$ (the pseudoinverse of $\bmat$) \cite{LMHMR15} and then choose $\bmat$ and $\covar$ to minimize the weighted sum of the diagonal entries of $\mat{L}\covar\mat{L}^\top$ while (1) keeping the privacy $\leq$ some user-defined threshold $\pcostbound$ and (2) ensuring the rowspace of $\bmat$ contains the rows of $\mat{W}$. When $\mat{W}^\top\mat{W}$ fits in memory, these requirements can be satisfied by solving a semidefinite programming problem using the algorithm of Yuan et al. \cite{YYZH16} or the accelerated variation used within HDMM \cite{mckenna2018optimizing}.


\section{Related Work}\label{sec:related}
There is a rich and ever-growing literature on differentially private query-answering.
Some of the most successful approaches \cite{McKenna_Miklau_Sheldon_2021} combine ideas
from the MWEM \cite{hardt2012simple} and matrix mechanism \cite{LMHMR15} line of work.
MWEM-style algorithms \cite{hardt2012simple,gaboardi2014dual,aydore2021differentially,zhang2018ektelo,liu2021leveraging,aim,mullins2024efficient,fuentes2026fast}
maintain a privacy-preserving data synopsis and iteratively identify which queries are poorly answered by it. They obtain noisy answers to those queries (or to alternative strategy queries \cite{mullins2024efficient,fuentes2026fast}) and then update the data synopsis. These query answers typically have bias and data-dependent noise, but tend to be very accurate.

The matrix mechanism line of work \cite{privelet,hay2009boosting,LMHMR15,YZWXYH12,YYZH16,nikolov2014new,edmonds2020power,hbtree,xiao2020optimizing,mckenna2018optimizing,rplanner,rplus,lebeda2025weightedfourierfactorizationsoptimal} answers workload queries by first finding surrogate queries, known as strategy queries, that can be answered with small amounts of noise. Workload query answers are then reconstructed from noisy answers to the strategy queries. They are unbiased and have data-independent variances. Matrix mechanisms are often used inside of other differentially private query-answering systems  \cite{tdahdsr,aim,privbayes,hbtree,fuentes2026fast,mullins2024efficient,mckenna2025scaling}.
Although some matrix mechanisms support pure $\epsilon$-differential privacy and use Laplace noise \cite{LMHMR15,YZWXYH12}, better privacy utility trade-offs are obtained with relaxed variants of differential privacy \cite{dworkKMM06:ourdata,zcdp,renyidp,fdp} and Gaussian noise.

Most work on matrix mechanisms tries to optimize for the weighted sum squared error of workload queries, although optimizing a weighted max \cite{nikolov2014new,edmonds2020power,xiao2020optimizing} is also important in practice. Yuan et al.  \cite{YYZH16} provide an optimal algorithm for optimizing the weighted sum-squared error in low dimensional settings, while Xiao et al. \cite{xiao2020optimizing} provide a corresponding algorithm for the weighted max objective.
RP \cite{rplanner} and  WFF \cite{lebeda2025weightedfourierfactorizationsoptimal} can support
a much wider range of convex objectives.

Workloads that can be answered from a collection of data marginals, such as marginal and range queries, are important in practice \cite{tdahdsr} and HDMM was the first scalable matrix mechanism that can optimize for them. However, its runtime and memory requirements were limited by the size of the data domain.
RP \cite{rplanner} overcame this limitation by adding noise to carefully chosen strategy queries called the residual basis. The scale of the noise added to the residual basis queries can be tuned to optimize workload accuracy. RP is optimal for marginal queries and scales with the size of the workload rather than the exponentially sized data domain. RP+ \cite{rplus} is an extension of RP that allows a user to customize which basis should be used, but it does not provide a way of choosing a good basis.
WFF \cite{lebeda2025weightedfourierfactorizationsoptimal} proposed an alternative basis that is optimal for marginals and circular product queries (e.g., circular range queries). It computes the multidimensional Fourier transform of data marginals and optimizes for workload accuracy by tuning the variance of the noise added to the Fourier coefficients. Our proposed mechanism \sysname maintains the scalability of \cite{rplanner,rplus,lebeda2025weightedfourierfactorizationsoptimal}, while  provably dominating them under sum-squared-error on any workload.


\section{Overview of {\protect\sysname}}\label{sec:overview}

We first provide an overview of, and  intuition behind, \sysname, and then present the full algorithm in Section \ref{sec:decomp}. Afterwards, we show how prior work like RP, RP+, and WFF can be recast as special cases of \sysname, and consequently \sysname dominates them on all workloads under sum-squared error.

Let $\datavec$ be the vector representation of the data, let  $\attset\subseteq\{\attr_1,\dots, \attr_\numattr\}$ be a subset of attributes,  and let $\datavec_\attset$ be the vector representation of the marginal on $\attset$ (as in Section \ref{sec:notation}). The vector $\datavec_\attset$ can be computed from $\datavec$ by the matrix multiplication: $\mquery{\attset}\datavec = \datavec_\attset$, where
\begin{align}
    \mquery{\attset} &= \mat{V}_1 \kron \mat{V}_2\kron \cdots \kron \mat{V}_\numattr\nonumber\\
    \text{where } \mat{V}_i &=
    \begin{cases}
        \identity_{\attsize_i} & \text{ if }\attr_i\in\attset\\
        \one^T_{\attsize_i} &\text{ otherwise }
    \end{cases}\nonumber
\end{align}
and so a linear query $\queryv$ over $\datavec_{\attset}$ (whose true answer is $\queryv\cdot\datavec_{\attset}$) can be turned into a linear query $\queryv^\prime$ over $\datavec$ by setting $\queryv^\prime=\queryv\mquery{\attset}$.
Given two subsets of attributes $\attset$ and $\attset^\prime$, the matrices $\mquery{\attset}$ and $\mquery{\attset^\prime}$ have linear dependencies between them. So, Xiao et al.  proposed the
concept of a residual space $\resid_{\attset}$ on $\attset$ \cite{rplanner}. $\resid_{\attset}$ is a linear subspace
of the row space of $\mquery{\attset}$ and any two residual spaces $\resid_{\attset}$ and $\resid_{\attset^\prime}$ (for $\attset\neq\attset^\prime$) are mutually orthogonal. Intuitively, $\resid_{\attset}$ corresponds to information that is unique to the marginal on $\attset$ \cite{rplanner}. Formally, $\resid_\attset$ is defined as:
\begin{align}
    \resid_{\attset} &= \text{rowspan}\left(\mat{W}_1 \kron \mat{W}_2\kron \cdots \kron \mat{W}_\numattr\right)\\
    \text{where } \mat{W}_i &= 
    \begin{cases}
        \identity_{\attsize_i} - \frac{1}{\attsize_i}\one_{\attsize_i}\one^T_{\attsize_i}& \text{ if }\attr_i\in\attset\\
        \one^T_{\attsize_i} &\text{ otherwise }
    \end{cases}\nonumber
\end{align}
RP \cite{rplanner} and WFF \cite{lebeda2025weightedfourierfactorizationsoptimal} proposed specific
linearly independent bases for each $\resid_\attset$ and used those bases as the rows in the strategy matrix
$\bmat$ of their Gaussian linear mechanisms (Definition \ref{def:glm}). In the case of RP+ \cite{rplus}, the user
was asked to supply the bases. \sysname takes a different approach: given a workload $\workload$ of linear queries that can be answered from a collection of 
marginals, \sysname breaks apart each $\queryv\in \workload$ by projecting it 
onto the different residual spaces. Subqueries over the same residual space are grouped together into a subworkload.
This results in a collection of low-dimensional subworkloads that are small enough to be independently optimized by 
an exact non-scalable algorithm \cite{yuan2015optimizing}. The matrix mechanisms created for each subworkload are stitched
together to create a matrix mechanism for the original workload.
Specifically, the steps are:
\begin{enumerate}[leftmargin=*]
    \item First,  queries in $\workload$ are organized by the marginals that can answer them. For example, a query that uses only attributes $\attr_2$ and $\attr_8$ is answerable from the marginal $\datavec_\attset$ on $\attset={\{\attr_2,\attr_8\}}$.
    \item Each $\queryv\in\workload$ that can be answered by $\datavec_\attset$ is projected onto the residual spaces $\resid_{\attset^\prime}$ for $\attset^\prime\subseteq\attset$.  For example, if query $\queryv$ can be answered from marginal $\datavec_{\{\attr_2,\attr_8\}}$, it will be decomposed into 4 pieces that we denote as $\queryv_{\deco\{\}}, \queryv_{\deco\{\attr_2\}}, \queryv_{\deco\{\attr_8\}},$ and  $\queryv_{\deco\{\attr_2, \attr_8\}}$. Subquery $\queryv_{\deco\{\}}$ can be answered from $\datavec_{\{\}}$, subquery $\queryv_{\deco\{\attr_2\}}$ can be answered from $\datavec_{\{\attr_2\}}$, etc. The answer to the original query $\queryv$ can computed by adding up the answers to the subqueries. 
    The details are presented in Section \ref{sec:decomp}.
    \item After this decomposition, \sysname creates subworkloads by grouping together subqueries that belong to the same residual space. That is,  subworkload  $\workload_{\deco\attset}$ consists of all subqueries that result from projections onto the residual space $\resid_{\attset}$. For each subworkload $\workload_{\deco\attset}$, \sysname can use any matrix mechanism construction technique, including the optimal low-dimensional algorithm from
    \cite{YYZH16}, to create a matrix mechanism $\mech_{\deco\attset}$ that minimizes the sum-squared error over the subworkload at a privacy cost of 1. 
    %
    \item Given the mechanisms $\mech_{\deco\attset}$ for different sets $\attset$, \sysname rescales the noise that they use, in order to optimize for the global weighted sum squared error of queries in $\workload$ subject to a given bound $\pcostbound$ on the privacy cost. Then \sysname runs those mechanisms to obtain noisy answers for the subqueries in each subworkload. 
    \item To reconstruct an answer to a query $\queryv\in\workload$, $\sysname$ simply adds up the noisy answers to its subqueries.
\end{enumerate}

\section{The {\protect \sysname} Algorithm}\label{sec:decomp}
We  present the pseudocode for \sysname in the next subsection and then explain the key details
in the following subsections.
\subsection{The Overall Algorithm}\label{sec:overallalg}

\begin{algorithm}[h]
   \DontPrintSemicolon
   \KwIn{$\data$,  $\workload$, weight for each $\queryv\in\workload$, target privacy cost $\pcostbound$}
   \For{$\queryv\in\workload$\label{line:part1start}}{
       $\attset\gets$ attributes used by $\queryv$.\;
       \For(\tcp*[f]{Decompose $\queryv$}){$\attset^\prime\subseteq\attset$}{
          Create $\queryv_{\deco\attset^\prime}$ using the method  in Section \ref{sec:vectordeco}.\;
          $\weight(\queryv_{\deco\attset^\prime})\gets \weight(\queryv)$.\;
          Add $\queryv_{\deco\attset^\prime}$ to the subworkload $\workload_{\deco\attset^\prime}$.\label{line:part1end}
       }
   }
   \tcp{Optimize subworkloads independently.}
   \For{each non-empty subworkload $\workload_{\deco\attset^\prime}$\label{line:part2start}}{
       Optimize $\workload_{\deco\attset^\prime}$ using a low-dimensional solver.\;
       $\mech_{\attset^\prime}$ is the Gaussian Linear Mechanism provided by the solver. I.e., $\mech_{\attset^\prime}(\datavec_{\attset^\prime})\equiv\bmat_{\attset^\prime} \datavec_{\attset^\prime}+ N(\mat{0},\covar_{\attset^\prime})$.\;
       $\recon_{\attset^\prime}$ is a linear reconstruction function provided by the solver. It can take a query and the output of $\mech_{\attset^\prime}$ and return an unbiased estimate of the query answer.\;
       $\varfun_{\attset^\prime}$ is the variance function provided by the solver. It takes a query and returns that variance of the reconstructed query answer given by $\recon_{\attset^\prime}$.\;
       $\loss_{\attset^\prime}$ is the weighted sum square variance of queries in $\workload_{\deco\attset^\prime}$.\label{line:part2end}
   }
   \tcp{Assemble $\&$ run  overall mechanism for $\workload$.}
   $\gamma\gets \frac{\sum_{\attset^\prime} \sqrt{\loss_{\attset^\prime}}}{\pcostbound}$\label{line:rescale1}\;
   \For{each non-empty subworkload $\workload_{\deco\attset^\prime}$}{
       $\sigma^2_{\attset^\prime}= \gamma/\sqrt{\loss_{\attset^\prime}}$.\label{line:rescale2}\;
       Define $\mech^*_{\attset^\prime}(\datavec_{\attset^\prime})\equiv\bmat_{\attset^\prime} \datavec_{\attset^\prime}+ N(\mat{0},\sigma^2_{\attset^\prime}\covar_{\attset^\prime})$\label{line:rescale3}\;
       $\outp_{\attset^\prime} \gets \mech^*_{\attset^\prime}(\datavec_{\attset^\prime})$\tcp{run the mechanism}\label{line:measure}
    }
    \tcp{Reconstruction phase (query answering)}
    \For{$\queryv\in\workload$\label{line:part4start}}{
        answer$(\queryv)\gets 0$ $\quad$and$\quad$ 
        variance$(\queryv)\gets 0$\;
        $\attset\gets$ attributes used by $\queryv$.\;
        \For{$\attset^\prime\subseteq\attset$}{
           answer$(\queryv) \text{ += } \recon_{\attset^\prime}(\queryv_{\deco\attset^\prime},\outp_{\attset^\prime})$\;
           variance$(\queryv)  \text{ += } \varfun_{\attset^\prime}(\queryv_{\deco\attset^\prime})$\label{line:part4end}
        }
    }
    \Return reconstructed answer to each query and its variance
\caption{Pseudocode for \sysname}\label{alg:smasher}
\end{algorithm}


The pseudocode for \sysname is shown in Algorithm \ref{alg:smasher}.
Every $\queryv\in\workload$ has a weight, denoted as $\weight(\queryv)$.
The goal is to provide unbiased noisy answers to these queries while minimizing the weighted sum of their variances:
\begin{align}
    \loss &= \sum_{\queryv\in\workload} \weight(\queryv)*\text{Variance}(\text{noisy answer to }\queryv)\label{eqn:obj}
\end{align}
while keeping the privacy cost bounded by $\pcostbound$.

The first for-loop, covering Lines \ref{line:part1start} through \ref{line:part1end}, decomposes each query $\queryv$ into its subqueries by projecting it onto the appropriate residual spaces. If a query accesses a subset $\attset$ of attributes (i.e., $\queryv$ is answerable from $\datavec_\attset$), then there is one subquery for each $\attset^\prime\subseteq \attset$. Thus, for example, a 2-d range query on income and education level for a specific gender accesses 3 attributes (income, education level, gender) and would be decomposed into 8 subqueries. Then each subquery is assigned to the appropriate subworkload. The weight of each subquery is equal to the weight of the original query. The formula for the decomposition is in Section \ref{sec:vectordeco}.

The next for-loop, covering Lines \ref{line:part2start} through \ref{line:part2end} iterates through the subworkloads. For each subworkload $\workload_{\deco\attset^\prime}$, it uses a low-dimensional algorithm (e.g.,  \cite{yuan2015optimizing}) to create a  matrix mechanism for the queries in $\workload_{\deco\attset^\prime}$ that minimizes their weighted sum-squared error at a privacy cost of 1. Each subworkload can be processed independently and in parallel. Note: data are \emph{not} accessed in this step. The details are presented in Section \ref{sec:solvers}.  Also note that algorithms equivalent to RP, RP+, and WFF can be recovered as special cases by using weaker matrix mechanisms than \cite{yuan2015optimizing} in this step (we explain this in Section \ref{sec:accuracy}). 
%

A matrix mechanism created in this step consists of the following components (1) the gaussian linear mechanism $\mech_{\attset^\prime}(\datavec_{\attset^\prime})=\bmat_{\attset^\prime} \datavec_{\attset^\prime}+ N(\mat{0},\covar_{\attset^\prime})$, (2) a linear reconstruction function $\recon_{\attset^\prime}$ that can answer a 
query  $\query_{\deco\attset^\prime}\in \workload_{\deco\attset^\prime}$ if one has the noisy output of $\mech_{\attset^\prime}(\datavec_{\attset^\prime})$, and (3) a function $\varfun_{\attset^\prime}$ such that $\varfun_{\attset^\prime}(\query_{\deco\attset^\prime})$ is the variance of the reconstructed answer to $\query_{\deco\attset^\prime}$. Neither  $\recon_{\attset^\prime}$ nor $\varfun_{\attset^\prime}$ require access to data, so that one can calculate the variance of any query before seeing the data.


Next, the amount of noise used by each $\mech_{\attset^\prime}$ needs to be rescaled so that the total privacy cost is  $\leq\pcostbound$. Lines \ref{line:rescale1} and \ref{line:rescale2} compute the optimal rescaling (Section \ref{sec:rescale}) and Line \ref{line:rescale3} uses it to adjust each mechanism's noise. Line \ref{line:measure}
then runs the mechanisms (this is the only line  that accesses data).
%

Lines \ref{line:part4start} through \ref{line:part4end} reconstruct the answer to every $\queryv\in\workload$ and compute its variance by reconstructing the answers to the subqueries of $\queryv$ (along with their variances) and adding them together.

\subsection{Query Decomposition Details}\label{sec:vectordeco}
We next explain how \sysname decomposes each $\queryv$.
Suppose a query $\queryv$ uses the attribute set $\attset=\{\attr_{i_1},\dots,\attr_{i_\ell}\}$.  For each $\attset^\prime\subseteq \attset$, define the subquery $\queryv_{\deco\attset^\prime}$ as:
\begin{align}
    \queryv_{\deco\attset^\prime} &= \queryv \left(\mat{W}_{i_1} \kron \mat{W}_{i_2}\kron\cdots\kron \mat{W}_{i_\ell}\right)\label{eq:vecdecomp}\\
    &\text{where }
    \mat{W}_{i_j} =
    \begin{cases}
        \frac{1}{\attsize_{i_j}}\one_{\attsize_{i_j}}  & \text{ if }\attsize_{i_j}\notin \attset^\prime\\
        \identity_{\attsize_{i_j}} - \frac{1}{\attsize_{i_j}}\one_{\attsize_{i_j}}\one_{\attsize_{i_j}}^T &\text{otherwise}
    \end{cases}\nonumber
\end{align}

\begin{figure}[t]
    \centering
    $\query:$
    \begin{tabular}{c|c|c|c|}
      \multicolumn{1}{c}{}   & \multicolumn{1}{c}{a} &  \multicolumn{1}{c}{b} & \multicolumn{1}{c}{c}\\\cline{2-4}
       yes  & 0 & 1 & 1\\\cline{2-4}
       no  & 0 & 0 & 1\\\cline{2-4}
    \end{tabular}
    $\qquad\queryv: [0, 1, 1, 0, 0, 1]$
    \caption{A query answerable by the marginal on $\attset={\{\attr_1,\attr_2\}}$
    with domain($\attr_1$)=$\{$yes, no$\}$ and domain($\attr_2$)=$\{$a,b,c$\}$. The tensor form $\query$ (left) and
    vector form $\queryv$ (right) are shown. The query asks for the number of records that satisfy ($\attr_1=$ yes,
    $\attr_2=$ b) $\vee$ ($\attr_1=$ yes, $\attr_2=$ c) $\vee$ ($\attr_1= $ no, $\attr_2=$ c).}
    \label{fig:decomposition_pre}
\end{figure}

\begin{figure}[t]
    \centering
    \begin{tabular}{rr}
    $\query_{\deco\{\}}:$  &  \dotfill $~$1/2  \\
      $\queryv_{\deco\{\}}$: & \dotfill $~$1/2\\
        $\query_{\deco\{\attr_1\}}:$ & \dotfill
    \begin{tabular}{c|c|}\cline{2-2}
       yes  & 1/6 \\\cline{2-2}
       no  & -1/6\\\cline{2-2}
    \end{tabular} \\
     $\queryv_{\deco\{\attr_1\}}$:  & \dotfill $[1/6,~ -1/6]$\\
       $\query_{\deco\{\attr_2\}}:$ &\dotfill
    \begin{tabular}{|c|c|c|}
       \multicolumn{1}{c}{a} &  \multicolumn{1}{c}{b} & \multicolumn{1}{c}{c}\\\cline{1-3}
         -1/2 & 0 & 1/2\\\cline{1-3}
    \end{tabular} \\
     $\queryv_{\deco\{\attr_2\}}$: & \dotfill $[-1/2, ~0, ~1/2]$\\
    $\query_{\deco\{\attr_1, \attr_2\}}:$ & \dotfill
    \begin{tabular}{c|c|c|c|}
      \multicolumn{1}{c}{}   & \multicolumn{1}{c}{a} &  \multicolumn{1}{c}{b} & \multicolumn{1}{c}{c}\\\cline{2-4}
       yes  & $-1/6$ & $1/3$ & $-1/6$\\\cline{2-4}
       no  & $1/6$ & $-1/3$ & $1/6$\\\cline{2-4}
    \end{tabular} \\
     $\queryv_{\deco\{\attr_1, \attr_2\}}$:&  \dotfill $[-1/6,~ 1/3,~ -1/6,~ 1/6,~ -1/3,~ 1/6]$\\
   \end{tabular}
    \caption{The decomposition of $\query$ from Figure \ref{fig:decomposition_pre} into the subqueries $\query_{\deco\{\}}, \query_{\deco\{\attr_1\}}, \query_{\deco\{\attr_2\}},$ and  $\query_{\deco\{\attr_1, \attr_2\}}$}
    \label{fig:decomposition_post}
\end{figure}


For example, consider the query whose vector representation is shown in Figure \ref{fig:decomposition_pre} (right). The dataset can have many attributes, but the query can be answered from $\datavec_{\{\attr_1,\attr_2\}}$, the vector representation of the marginal on  $\attset=\{\attr_1,\attr_2\}$. Attribute $\attr_1$ has domain size $\attsize_1=2$ and $\attr_2$ has domain size $\attsize_2=3$. This query is decomposed into 4 subqueries: $\queryv_{\deco\{\}}, \queryv_{\deco\{\attr_1\}}, \queryv_{\deco\{\attr_2\}}, $ and $\queryv_{\deco\{\attr_1,\attr_2\}}$.

To calculate  $\queryv_{\deco\{\}}$ using Equation \ref{eq:vecdecomp}, we multiply $\queryv$ on the right by $(\frac{1}{2}\one_2) \kron (\frac{1}{3}\one_3)$, which is a 6-dimensional vector whose entries are all $1/6$. Thus $\queryv_{\deco\{\}}$ is a scalar, the average of the entries in $\queryv$. 

To calculate $\queryv_{\deco(\attr_2)}$, we multiply $\queryv$ on the right by
$(\frac{1}{2}\one_2) \kron (\identity_3 - \frac{1}{3}\one_3\one_3^T)$, i.e.,
$\left[\begin{smallmatrix}1/2\\1/2\end{smallmatrix}\right]\kron \left[\begin{smallmatrix}\phantom{-}2/3 &-1/3 & -1/3\\-1/3 & \phantom{-}2/3 &-1/3\\-1/3 & -1/3 & \phantom{-}2/3\end{smallmatrix}\right]$, and so on.
The four queries that $\queryv$ is decomposed into are shown in Figure \ref{fig:decomposition_post}. Decomposed queries have the following important properties:
\begin{enumerate}[leftmargin=*]
    \item Each subquery $\queryv_{\deco\attset^\prime}$ is answerable from the marginal $\datavec_{\attset^\prime}$  by taking dot products: $\queryv_{\deco\attset^\prime}\cdot\datavec_{\attset^\prime}$.
    \item If $\attset^\prime\neq \attset^{\prime\prime}$ then the decompositions into  $\queryv_{\deco\attset^\prime}$ and  $\queryv_{\deco\attset^{\prime\prime}}$ are orthogonal. That  is, if $\queryv_{\deco\attset^\prime}=\queryv\mat{Q}_1$ and  $\queryv_{\deco\attset^{\prime\prime}}=\queryv\mat{Q}_2$ then the columns of $\mat{Q}_1$ are orthogonal to the columns of $\mat{Q}_2$.
    \item The answer to $\queryv$ is equal to the sum of the answers of the subqueries. That is $\queryv\cdot\datavec_\attset=\sum\limits_{\attset^\prime\subseteq\attset} \queryv_{\deco\attset^\prime}\cdot\datavec_{\attset^\prime}$.
    \item If the subqueries are answered by matrix mechanisms that are independent of each other, the variance of $\queryv$ equals the sum of the variances of its subqueries.
\end{enumerate}
The first and fourth properties are obvious and the other two are proved in the following Theorem \ref{thm:decomp}. The second property means that the information contained in one subquery $\queryv_{\deco\attset^\prime}$ is independent of information contained in a different subquery $\queryv_{\deco\attset^{\prime\prime}}$. That is, if both are rewritten as query vectors over the full data vector $\datavec$, then those query vectors are orthogonal.
Since $\queryv_{\deco\attset^\prime}$ is later placed into subworkload $\workload_{\deco\attset^\prime}$ and $\queryv_{\deco\attset^{\prime\prime}}$ is placed into $\workload_{\deco\attset^{\prime\prime}}$, this means that subworkloads are orthogonal and linearly independent (i.e., if their subqueries are rewritten as query vectors over the full data vector $\datavec$, the subqueries in $\workload_{\deco\attset^{\prime}}$ are orthogonal to subqueries in $\workload_{\deco\attset^{\prime\prime}}$). Hence, noisy answers to subqueries in $\workload_{\deco\attset^{\prime}}$ carry no information about noisy answers to subqueries in $\workload_{\deco\attset^{\prime\prime}}$ and hence subworkloads can be optimized independently of each other.

The last two properties in the list above are the reason that reconstruction (Lines \ref{line:part4start} through \ref{line:part4end} in Algorithm \ref{alg:smasher}) is correct, since subqueries are placed in different subworkloads, and the subworkloads are optimized separately.

\begin{theoremE}[][category=decomp,proof here]\label{thm:decomp}
Let $\attset=\{\attr_{i_1},\dots,\attr_{i_\ell}\}$ be a subset of attributes. Let $\queryv$ be a query that is answerable from the marginal $\datavec_{\attset}$. For any $\attset^\prime\subseteq\attset$, define $\mat{Q}_{\attset^\prime}$ using Equation \ref{eq:vecdecomp} so that $\queryv_{\deco\attset^\prime}=\queryv\mat{Q}_{\attset^\prime}$. Then
$\queryv \cdot \datavec_{\attset} = \sum\limits_{\attset^\prime\subseteq \attset} \queryv_{\deco\attset^\prime}\cdot \datavec_{\attset^\prime}$.
Furthermore, if $\attset^\prime\neq\attset^{\prime\prime}$ then $\mat{Q}_{\attset^\prime}^\top\mat{Q}_{\attset^{\prime\prime}}=\mat{0}$.
\end{theoremE}
\begin{proofE}
    We prove orthogonality first. Represent the matrices in their kronecker product forms: $\mat{Q}_{\attset^\prime}=\mat{W}^\prime_1\kron \mat{W}^\prime_2\kron\cdots\kron\mat{W}^\prime_\ell$ and $\mat{Q}_{\attset^{\prime\prime}}=\mat{W}^{\prime\prime}_1\kron \mat{W}^{\prime\prime}_2\kron\cdots\kron\mat{W}^{\prime\prime}_\ell$. Then since $\attset^\prime\neq \attset^{\prime\prime}$, there exists an index $j$ such that $\mat{W}^\prime_j\neq\mat{W}^{\prime\prime}_j$. Specifically, one of those matrices must be equal to $\frac{1}{\attsize_j}\one_{\attsize_{i_j}}$ and the other must be equal to $\identity_{\attsize_{i_j}}-\frac{1}{\attsize_j}\one_{\attsize_{i_j}}\one_{\attsize_{i_j}}^T$. In this case, clearly $(\mat{W}_j^\prime)^T\mat{W}_j^{\prime\prime}=\mat{0}$ and $(\mat{W}_j^{\prime\prime})^T\mat{W}_j^{\prime}=\mat{0}$. Thus, 
\begin{multline*}
\mat{Q}_{\attset^\prime}^T\mat{Q}_{\attset^{\prime\prime}} = ((\mat{W}_1^\prime)^T\mat{W}_1^{\prime\prime}) \kron \cdots \\
\kron((\mat{W}_j^\prime)^T\mat{W}_j^{\prime\prime})\kron \cdots\kron ((\mat{W}_\ell^\prime)^T\mat{W}_\ell^{\prime\prime})=\mat{0}
\end{multline*}

    Next, we note that for any $\attset^\prime\subseteq\attset$, then $\datavec_{\attset^\prime}$ can be obtained from $\datavec_\attset$ by left multiplication by a matrix. Specifically,
    \begin{align*}
    \datavec_{\attset^\prime} &= \left(\mat{V}^\prime_1\kron\mat{V}^\prime_2\kron\cdots\kron\mat{V}^\prime_\ell\right)\datavec_\attset\\
    &\text{where }\mat{V}_j^\prime=
    \begin{cases}
        \one_{\attsize_{i_j}}^T &\text{ if }\attr_i\notin\attset^\prime\\
        \identity_{\attsize_{i_j}} &\text{ otherwise}
    \end{cases}
    \end{align*}
    Let $\mat{Q}_{\attset^\prime}=\mat{W}_1^\prime\kron\cdots\kron\mat{W}_\ell^\prime$ be the matrix that turns $\queryv$ into $\queryv_{\deco\attset^\prime}$ by right multiplication as in Equation \ref{eq:vecdecomp} and
    let $\mat{V}_{\attset^\prime}=\mat{V}_1^\prime\kron\cdots\kron\mat{V}_\ell^\prime$ be the matrix that converts
    $\datavec_{\attset}$ into $\datavec_{\attset^\prime}$ by left multiplication. It is clear that
    $\mat{W}_j^\prime=\frac{1}{\attsize_{i_j}}\one_{\attsize_{i_j}}$ if and only if $\mat{V}^\prime_j=\one_{\attsize_{i_j}}^T$. Therefore 
    \begin{align*}
        \mat{W}^\prime_j\mat{V}^\prime_j &=
        \begin{cases}
            \frac{1}{\attsize_{i_j}}\one_{\attsize_{i_j}}\one_{\attsize_{i_j}}^T &\text{ if }\attr_{i_j}\notin \attset^\prime\\
            \identity_{\attsize_{i_j}} - \frac{1}{\attsize_{i_j}}\one_{\attsize_{i_j}}\one_{\attsize_{i_j}}^T &\text{ otherwise}
        \end{cases}
    \end{align*}
    Therefore
    \begin{align*}
        &\sum_{\attset^\prime\subseteq \attset} \queryv_{\deco\attset^\prime}\cdot \datavec_{\attset^\prime}\\
        &=\sum_{\attset^\prime\subseteq \attset} \queryv \mat{Q}_{\attset^\prime} \mat{V}_{\attset^\prime}\datavec_{\attset}
        =\queryv \left(\sum_{\attset^\prime\subseteq \attset} \mat{Q}_{\attset^\prime} \mat{V}_{\attset^\prime}\right)\datavec_{\attset}\\
        & =\queryv \left(\sum_{\attset^\prime\subseteq \attset}\Kron_{j=1}^\ell 
        \left(\substack{
            \frac{1}{\attsize_{i_j}}\one_{\attsize_{i_j}}\one_{\attsize_{i_j}}^T \qquad\text{ if }\attr_{i_j}\notin \attset^\prime\\
            \identity_{\attsize_{i_j}} - \frac{1}{\attsize_{i_j}}\one_{\attsize_{i_j}}\one_{\attsize_{i_j}}^T \text{ otherwise}
        }\right)
        \right)\datavec_{\attset}\\
        &=\queryv\!\left(\Kron_{j=1}^\ell\!\left(\!\left(\identity_{\attsize_{i_j}} - \tfrac{1}{\attsize_{i_j}}\one_{\attsize_{i_j}}\one_{\attsize_{i_j}}^T\right)\right.\right.\\
        &\qquad\qquad\left.\left.{}+ \tfrac{1}{\attsize_{i_j}}\one_{\attsize_{i_j}}\one_{\attsize_{i_j}}^T\right)\!\right) \datavec_{\attset}
        \\
        &=\queryv\datavec_{\attset}
    \end{align*}
    where the second-to-last equality is obtained by transforming the expression
    $\sum_{\attset^\prime\subseteq\attset} f_0(\attset^\prime)$
    (here $f_0$ is that Kronecker product on the 3rd equation line) into
    $\sum_{\attset^\prime\subseteq \attset\setminus\{\attr_{i_1}\}} f_1(\attset^\prime)$,
    where $f_1(\attset^\prime)=f_0(\attset^\prime) + f_0(\attset^\prime\cup\{\attr_{i_1}\})$,
    then converting that expression into
    $\sum_{\attset^\prime\subseteq \attset\setminus\{\attr_{i_1},\attr_{i_2}\}} f_2(\attset^\prime)$,
    where $f_2(\attset^\prime)=f_1(\attset^\prime) + f_1(\attset^\prime\cup\{\attr_{i_2}\})$, etc.
\end{proofE}

\subsection*{Tensor View of Query Decomposition}
As some readers find tensor representations of data and queries more intuitive, we  explain
 query decomposition from this viewpoint.
Let $\attset=\{\attr_{i_1},\dots,\attr_{i_\ell}\}$ be a subset of attributes and let $\marginal_\attset$ be a marginal on those attributes, expressed as a tensor. Let $\query$ be a query (in tensor form)  that is answerable from $\marginal_\attset$ (e.g., left part of Fig \ref{fig:decomposition_pre}). Given an $\attset^\prime\subseteq \attset$, we construct $\query_{\deco\attset^\prime}$ from $\query$ as follows:
\begin{enumerate}[leftmargin=*]
    \item Add up the tensor representation of $\query$ over the dimensions not  in $\attset^\prime$, to get an intermediate tensor $\mat{X}_1$. The size of the query representation is $\prod\limits_{\attr_{i_j}\in\attset} \attsize_{i_j}$ while the size of  $\mat{X}_1$ is  $\prod\limits_{\attr_{i_j}\in\attset^\prime} \attsize_{i_j}$.
    \item Multiply $\mat{X}_1$ by the scalar $(\prod_{\attr_{i_j}\in\attset^\prime} \attsize_{i_j})/ (\prod_{\attr_{i_j}\in\attset} \attsize_{i_j})$, to get the intermediate tensor $\mat{X}_2$. Thus $\mat{X}_2$ is equivalent to averaging $\query$ over the dimensions not included in $\attset^\prime$.
    \item $\query_{\deco\attset^\prime}$ is obtained from $\mat{X}_2$ by centering over each dimension.
\end{enumerate}
As an example, consider the query in Figure \ref{fig:decomposition_pre}. To obtain $\query_{\deco\{\}}$, we sum up $\query$ along each dimension to get 3. Since the size (number of cells) of $\query$ is 6 and the size of $\query_{\deco\{\}}$ is 1, we multiply the 3 by 1/6 to get $\query_{\deco\{\}}=1/2$, which is the average of the entries in $\query$.

To obtain $\query_{\deco\{\attr_2\}}$, we sum up along the dimension for $\attr_1$ to get \begin{tabular}{|c|c|c|}\hline
0&1&2\\\hline\end{tabular}. This result has size 3 while $\query$ has size 6, so we multiply by 3/6 to get \begin{tabular}{|c|c|c|}\hline
0&1/2&1\\\hline\end{tabular}. To center this row (so it has mean 0), we need to subtract 1/2 from each entry to get \begin{tabular}{|c|c|c|}\hline
-1/2&0&1/2\\\hline\end{tabular}, which is equal to the $\query_{\deco\{\attr_2\}}$ shown in Figure \ref{fig:decomposition_post}.

Finally, to obtain $\query_{\deco\{\attr_1,\attr_2\}}$ (i.e., $\attset^\prime=\attset$), no summation needs to be performed in Step 1. Step 2 multiplies every entry by 1. Step 3 requires us to iteratively center each row and then each column (one can alternatively center columns first and then rows -- this doesn't change the outcome). We center the first row by subtracting $2/3$ from each entry in row 1, and center the second row by subtracting $1/3$. This gives 
\begin{tabular}{|c|c|c|}\hline
-2/3&1/3&1/3\\\hline
-1/3&-1/3&2/3\\\hline\end{tabular}.
Now each row adds up to 0. To center the columns, we add 1/2 to the first column, 0 to the second, and subtract 1/2 from the third. This yields \begin{tabular}{|c|c|c|}\hline
-1/6&1/3&-1/6\\\hline
1/6&-1/3&1/6\\\hline\end{tabular}, which is $\query_{\deco\{\attr_1,\attr_2\}}$.
Note the rows and columns add up to 0.

Thus, we see that in the tensor view, we average along the dimensions we eliminate, and then center along each dimension we keep so that the average along those dimensions is 0.


\subsection{Optimizing a Workload Partition}\label{sec:solvers}

Once \sysname has performed an orthogonal decomposition of a workload $\workload$ into independent subworkloads $\workload_{\deco\attset^\prime}$, it can optimize each subworkload independently and in parallel. That is, for each
$\workload_{\deco\attset^\prime}$, it needs to create a matrix mechanism that can answer the queries in $\workload_{\deco\attset^\prime}$ at a privacy cost of 1 while minimizing the weighted sum squared error over 
$\workload_{\deco\attset^\prime}$.
\sysname is free to use any technique in the literature to perform this task, even the optimal non-scalable low-dimensional algorithms \cite{yuan2015optimizing} for matrix mechanism construction. The reason is that if all queries $\queryv\in\workload$  only access a few attributes each, then the decomposed queries $\queryv_{\deco\attset^\prime}$ will also access only a few attributes each. The size of the vector representation of $\queryv_{\deco\attset^\prime}$ is $\prod_{\attr_i\in\attset^\prime} |\attr_{i}|$ and this is manageable when $\attset^\prime$ is small (otherwise, one can add approximation algorithms to \sysname as discussed below). 

%
Given a subworkload $\workload_{\deco\attset^\prime}$ that can be answered from a  marginal $\datavec_{\attset^\prime}$, a matrix mechanism constructor \cite{LMHMR15} produces:
%
\begin{itemize}[leftmargin=*]
   \item A gaussian linear mechanism $\mech(\datavec_{\attset^\prime})\equiv \bmat \datavec_{\attset^\prime} + N(\mat{0},\covar)$. This mechanism is run once to get a noisy output $\vec{z}$.
    \item $\recon$: a linear function such that $\recon(\queryv,\vec{z}$) is a noisy answer to $\queryv\in\workload_{\attset^\prime}$. 
    \item $\varfun$: a function such that $\varfun(\queryv)$  is the variance of the reconstructed answer (i.e. Variance$(\recon(\queryv,\vec{z}))$. This variance is data-independent hence $\varfun$ does not need any data  access.
\end{itemize}

\subsubsection*{Optimal Matrix Mechanism Construction}
Let $\mat{W}$ be the matrix whose rows are the query vectors in $\workload_{\deco\attset^\prime}$.
Let $\mat{D}$ be a diagonal matrix, where entry $\mat{D}[i,i]$ is the weight of the query that is in the $i^\text{th}$ row of $\mat{W}$. When the goal is to minimize the weighted sum squared error of the queries,
Yuan et al. \cite{yuan2015optimizing} formulated matrix mechanism construction in terms of the following semidefinite optimization problem (where $\dagger$ denotes pseudoinverse):
\begin{align}
    \arg\min_{\mat{V}}\quad&\text{trace}(\mat{W}^\top\mat{D}\mat{W}\mat{V}^{\dagger})\label{eqn:semidef}\\
    &\text{s.t. }\mat{V} \text{ is symmetric positive semidefinite }\nonumber\\
    &\text{the row space of }\mat{V} \text{ contains the row space of }\mat{W}\nonumber\\
    &\text{and }\mat{V}[i,i]\leq 1 \text{ for all }i\nonumber
\end{align}
which can be solved using the algorithm of Yuan et al. \cite{yuan2015optimizing} or a simplified version
by McKenna et al. \cite{mckenna2018optimizing} or even off-the-shelf semidefinite program solvers. Once a solution $\mat{V}$ is obtained, the rest of the steps are:
\begin{enumerate}[leftmargin=*]
\item Compute the Cholesky decomposition $\mat{V}=\bmat^T\bmat$ and returns $\bmat$. Optionally, project the rows of $\bmat$ onto the rowspace of $\mat{W}$. 
\item Given a desired privacy cost $\pcostbound$, set $\sigma^2=\max_i (\bmat^\top\bmat)[i,i]/\pcostbound$. Set  $\mech(\datavec_{\attset^\prime}) \equiv \bmat \datavec_{\attset^\prime} + N(0, \sigma^2\identity)$.
\item Define $\recon(\queryv,\vec{z})\equiv\queryv\bmat^\dagger\vec{z}$, where $\bmat^\dagger$ is the pseudoinverse.
\item Define $\varfun(\queryv)\equiv \sigma^2\queryv\bmat^\dagger(\bmat^\dagger)^\top\queryv^\top$ (recall $\queryv$ is a row vector).
\end{enumerate}

\subsubsection*{Approximate Matrix Mechanism Construction}
Sometimes the semidefinite program in Equation \ref{eqn:semidef} is difficult to solve because the main
quantity $\mat{W}^\top\mat{D}\mat{W}$ is very large. This can happen when $\attset^\prime$
has many variables (i.e., some queries do access many variables) or when some attributes $\attr_i\in\attset^\prime$
already have large domains. The flexibility of \sysname means that in those situations, approximate algorithms
for creating matrix mechanisms can be added to \sysname to handle those cases. Common approaches include (1) not running the solver for Equation \ref{eqn:semidef}
to completion, or (2) imposing additional structure on the optimization variables $\mat{V}$. For example, one common
approach used in practice is to fix a basis $\mat{U}$ whose rowspan includes the rows of $\mat{W}$, and require that $\mat{V}=\mat{U}^T\mat{H}\mat{U}$
for some diagonal matrix $\mat{H}$ \cite{privelet,dawa,LMHMR15}. This results in fewer optimization parameters (the diagonal entries instead of a full matrix)
and can often be solved in close form.
This flexibility and modularity is responsible for the generality of \sysname. In fact, we show in Section \ref{sec:accuracy} that using approximate (instead of optimal) techniques can make \sysname equivalent to
RP, RP+, and WFF.

\subsection{Assembling the Overall Solution.}\label{sec:rescale}

After \sysname is done constructing a matrix mechanism for each nonempty subworkload $\workload_{\deco\attset^\prime}$, it ends up with a linear gaussian mechanism for each $\attset^\prime$,
having the form $\mech_{\attset^\prime}(\datavec_{\attset^\prime})\equiv \bmat_{\attset^\prime} \datavec_{\attset^\prime} + N(\mat{0}, \covar_{\attset^\prime})$. Each such mechanism has privacy cost 1. Let $L_{\attset^\prime}$ be the weighted sum squared error of the matrix mechanism over the subworkload  $\workload_{\deco\attset^\prime}$.

If \sysname rescales the noise used by $\mech_{\attset^\prime}$ by some value $\sigma^2_{\attset^\prime}$
(i.e., redefining it as
$\mech_{\attset^\prime}(\datavec_{\attset^\prime})\equiv \bmat_{\attset^\prime} \datavec_{\attset^\prime} + N(\mat{0}, \sigma^2_{\attset^\prime}\covar_{\attset^\prime})$)
then the privacy cost becomes $1/\sigma^2_{\attset^\prime}$ and the weighted sum-squared error over the subworkload becomes $\sigma^2_{\attset^\prime} L_{\attset^\prime}$.

Thus, \sysname searches for values of $\sigma^2_{\attset^\prime}$, for each $\attset^\prime$, so that the overall privacy cost is $\pcostbound$, while making the total weighted sum squared error as small as possible. That is, it needs to solve the following optimization problem:
\begin{align*}
    \min_{\sigma^2_{\attset^\prime}} \sum_{\attset^\prime} \sigma^2_{\attset^\prime} L_{\attset^\prime}\quad
                                     \text{s.t., }\sum_{\attset^\prime} \frac{1}{\sigma^2_{\attset^\prime}}=\pcostbound
\end{align*}
The form of this optimization problem was studied in \cite{rplus} (Lemma 3) and the optimal choices of the $\sigma^2_{\attset^\prime}$ are:
\begin{align*}
    \sigma^2_{\attset^\prime} &= \frac{1}{\pcostbound} \frac{\sum_{\attset^{\prime\prime}}\sqrt{L_{\attset^{\prime\prime}}}}{\sqrt{L_{\attset^\prime}}}
\end{align*}
These scale factors are computed by Algorithm \ref{alg:smasher} in Line \ref{line:rescale2}. This discussion proves the privacy properties of \sysname:
\begin{theorem}
    The privacy cost of Algorithm \ref{alg:smasher} is $\leq$  the input parameter $\pcostbound$.
\end{theorem}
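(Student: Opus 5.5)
The plan is to apply Theorem \ref{thm:overall} to the collection of rescaled mechanisms $\mech^*_{\attset^\prime}$ produced in Line \ref{line:rescale3}. Only Line \ref{line:measure} touches the data, and everything after it is postprocessing. Since each $\mech^*_{\attset^\prime}$ acts on a marginal $\datavec_{\attset^\prime}=\mquery{\attset^\prime}\datavec$, I will first rewrite it as a correlated Gaussian linear mechanism on the full data vector. Its strategy matrix is $\bmat_{\attset^\prime}\mquery{\attset^\prime}$ and its covariance is $\sigma^2_{\attset^\prime}\covar_{\attset^\prime}$.

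The overall privacy cost is the largest diagonal entry of
\begin{align*}
\sum_{\attset^\prime}\frac{1}{\sigma^2_{\attset^\prime}}\,(\mquery{\attset^\prime})^\top\bmat_{\attset^\prime}^\top\covar_{\attset^\prime}^{-1}\bmat_{\attset^\prime}\mquery{\attset^\prime}.
\end{align*}
Because the maximum of a sum of diagonals is at most the sum of the maxima, it suffices to bound each term's maximal diagonal entry by $1/\sigma^2_{\attset^\prime}$.

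For this, write $\mat{M}=\bmat_{\attset^\prime}^\top\covar_{\attset^\prime}^{-1}\bmat_{\attset^\prime}$, which is PSD with diagonal at most $1$ because the solver outputs a privacy cost of $1$. The diagonal entry of $(\mquery{\attset^\prime})^\top\mat{M}\mquery{\attset^\prime}$ at a full-domain cell is $e^\top\mat{M}e$, where $e=\mquery{\attset^\prime}\hot$ is the one-hot vector of the marginal cell that this record maps to. Each column of $\mquery{\attset^\prime}$ is a single one-hot column, since it is a Kronecker product of identity columns and $\one^T$ entries equal to $1$. So $e^\top\mat{M}e$ is a diagonal entry of $\mat{M}$ and is at most $1$. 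This is the only slightly delicate step: it is what justifies the paper's claim that rescaling the noise by $\sigma^2_{\attset^\prime}$ changes the cost to exactly $1/\sigma^2_{\attset^\prime}$ when viewed over $\datavec$.

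Finally, substitute the choice from Lines \ref{line:rescale1}--\ref{line:rescale2}, $\sigma^2_{\attset^\prime}=\gamma/\sqrt{\loss_{\attset^\prime}}$ with $\gamma=\sum_{\attset^{\prime\prime}}\sqrt{\loss_{\attset^{\prime\prime}}}/\pcostbound$. Then
\begin{align*}
\sum_{\attset^\prime}\frac{1}{\sigma^2_{\attset^\prime}}=\frac{\sum_{\attset^\prime}\sqrt{\loss_{\attset^\prime}}}{\gamma}=\pcostbound .
\end{align*}
Hence the total cost is at most $\pcostbound$. Theorem \ref{thm:overall} together with postprocessing invariance then gives the stated guarantee for the released outputs.

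A degenerate case needs a separate remark: a subworkload with $\loss_{\attset^\prime}=0$, which gives $\sigma^2_{\attset^\prime}=\infty$. Such a subworkload can simply be dropped from measurement, which contributes zero cost.
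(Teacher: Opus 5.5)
Your proposal is correct and follows the paper's own argument: each solver mechanism has privacy cost $1$, rescaling its noise by $\sigma^2_{\attset^\prime}$ makes its cost $1/\sigma^2_{\attset^\prime}$, the choice in Lines \ref{line:rescale1}--\ref{line:rescale2} makes these sum to $\pcostbound$, and the overall cost is at most this sum (the inequality in Equation \ref{eq:splittable}). Two of your steps are left implicit in the paper: the one-hot-column argument that lifts each marginal-level mechanism to the full data vector without increasing its cost, and the remark on the degenerate $\loss_{\attset^\prime}=0$ case.
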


After these calculations have been performed, the measure phase (which is the only step that accesses raw data, and hence the only step that incurs the privacy cost) can begin.
\sysname runs all of the linear gaussian mechanisms $\mech_{\attset^\prime}(\datavec_{\attset^\prime})= \bmat_{\attset^\prime} \datavec_{\attset^\prime} + N(\mat{0}, \sigma^2_{\attset^\prime}\covar_{\attset^\prime})$
with this appropriately scaled noise. It collects their respective noisy answers $\vec{z}_{\attset^\prime}$
and does not access the data anymore.

To answer a query $\queryv\in \workload$, \sysname re-uses the noisy answers it has collected.
If $\queryv$ accesses attributes in the set $\attset$, then \sysname decomposes it into subqueries $\queryv_{\deco\attset^\prime}$ for all $\attset^\prime\subseteq\attset$ using Equation \ref{eq:vecdecomp} from Section \ref{sec:vectordeco}. It reconstructs an answer to each $\queryv_{\deco\attset^\prime}$ using the reconstruction function $\recon_{\attset^\prime}$ provided by the matrix mechanism for the subworkload $\workload_{\deco\attset^\prime}$ (i.e., \sysname computes $\recon_{\attset^\prime}(\queryv_{\deco\attset^\prime}, \vec{z}_{\attset^\prime})$). Since the answer to $\queryv$ is the sum of the answers of its subqueries (Theorem \ref{thm:decomp}), the answer to $\queryv$ is calculated as:
\begin{align*}
    \text{answer}(\queryv) &= \sum_{\attset^\prime\subseteq\attset} \recon_{\attset^\prime}(\queryv_{\deco\attset^\prime}, \vec{z}_{\attset^\prime})
\end{align*}

\section{Accuracy of {\protect\sysname}}\label{sec:accuracy}
Next we show that algorithms equivalent to RP, RP+, and WFF can be obtained as special cases when \sysname uses approximate rather than optimal matrix mechanism construction
techniques, which implies that \sysname would dominate those methods on every workload, if it uses optimal solvers for the subworkloads. We do this by abstracting some common linear algebraic properties that they share. The first property is
respecting residual spaces, which we define as follows:

\begin{definition}\label{def:respectresid}
A gaussian linear mechanism $\mech$ \textbf{respects residual spaces} if there exists an $\attset=\{\attr_{i_1},\dots, \attr_{i_\ell}\}$ such that:
\begin{itemize}[leftmargin=*]
    \item $\mech(\datavec_{\attset})\equiv \bmat \datavec_{\attset} + N(\mat{0}, \covar)$ for some $\bmat$ and $\covar$ (i.e., $\mech$ adds Gaussian noise to linear queries computable from the marginal on $\attset$).
    \item Each row of $\bmat$ belongs to the row space of $$\left(\identity_{\attsize_{i_1}}-\frac{1}{\attsize_{i_1}}\one_{\attsize_{i_1}}\one^\top_{\attsize_{i_1}}\right)\kron \cdots \kron \left(\identity_{\attsize_{i_\ell}}-\frac{1}{\attsize_{i_\ell}}\one_{\attsize_{i_\ell}}\one^\top_{\attsize_{i_\ell}}\right).$$
\end{itemize}
\end{definition}
Intuitively, this means that rows of $\bmat$ belong to the same linear space as queries in the chosen subworkload $\workload_{\deco\attset}$ and hence $\mech$ could be used to provide answers to them.

\begin{example}
    ResidualPlanner (RP) \cite{rplanner} uses gaussian linear mechanisms of the form: $\mech(\datavec_{\attset})\equiv \bmat \datavec_{\attset} + N(\mat{0}, \covar)$, where 
    the matrix $\bmat=\textbf{Sub}_{\attsize_{i_1}}\kron\cdots\kron \textbf{Sub}_{\attsize_{i_\ell}}$, where $\textbf{Sub}_{\attsize}$ is a $(\attsize-1)\times\attsize$ matrix that can be written in block matrix form as $[\one_{\attsize}\quad-\identity_{\attsize}]$ (the first column is full of ones, entries of the form $(i, i+1)$ contain -1, and everything else is 0). For example $\textbf{Sub}_3=\left(\begin{smallmatrix}1 & -1 & 0\\ 1 & 0 & -1\end{smallmatrix}\right)$. It is easy to verify that the row space of each $\textbf{Sub}_{\attsize_{i_j}}$ is a subspace of the rowspace of $\left(\identity_{\attsize_{i_j}}-\frac{1}{\attsize_{i_j}}\one_{\attsize_{i_j}}\one^\top_{\attsize_{i_j}}\right)$ and so $\bmat$ satisfies the condition of Definition \ref{def:respectresid}. Therefore every gaussian linear mechanism used by RP respects residual spaces.
\end{example}

\begin{example}
    ResidualPlanner+ (RP+) \cite{rplus} uses gaussian linear mechanisms of the form: $\mech(\datavec_{\attset})\equiv \bmat \datavec_{\attset} + N(\mat{0}, \covar)$, where 
    the matrix $\bmat=\mat{S}_{1}\kron\cdots\kron\mat{S}_\ell$, where each $\mat{S}_i$ depends on a user's input and is modified so that its row space belongs to the row space of $\left(\identity_{\attsize_{i_j}}-\frac{1}{\attsize_{i_j}}\one_{\attsize_{i_j}}\one^\top_{\attsize_{i_j}}\right)$. Hence RP+ also respects residual spaces. 
\end{example}

\begin{example}
    Weighted Fourier Factorization (WFF) \cite{lebeda2025weightedfourierfactorizationsoptimal} applies the multidimensional Fourier transform to different data marginals and adds independent noise to their coefficients. This is equivalent to creating linear mechanisms of the form $\mech(\datavec_{\attset})\equiv \bmat \datavec_{\attset} + N(\mat{0}, \covar)$ where $\covar$ is a diagonal matrix and each row of $\bmat$ is created by taking the real and imaginary components of vectors of the form $\vec{v}_1 \kron \vec{v}_2 \kron \cdots \kron \vec{v}_\ell$, where each $\vec{v}_j$ has size $\attsize_{i_j}$ and has the form:
    \begin{align*}
        \vec{v}_j&=\Bigl[\exp\!\bigl( \tfrac{-2\pi\imag t}{\attsize_{i_j}}\cdot 0\bigr),\;
        \exp\!\bigl( \tfrac{-2\pi\imag t}{\attsize_{i_j}}\cdot 1\bigr),\\
        &\qquad\quad \dots,\;
        \exp\!\bigl( \tfrac{-2\pi\imag t}{\attsize_{i_j}}\cdot(\attsize_{i_j}-1)\bigr)\Bigr]
    \end{align*}
    where $\imag=\sqrt{-1}$ and $t$ is an integer between $1$ and $|\attsize_{i_j}|-1$. Varying the values of $t$ inside  each of the $\vec{v}_j$ gives the different rows of $\bmat$.
    Although this is a bit complex, one can see that each $\vec{v}_j$ is orthogonal to $\one_{\attsize_{i_j}}$ and therefore $\bmat$ satisfies the conditions of Definition \ref{def:respectresid}.
\end{example}

The next condition is that if a matrix mechanism is given an overall target privacy cost of $\pcostbound$, then the privacy costs of the gaussian linear mechanisms $\mech_1,\dots, \mech_m$ it uses add up to $\pcostbound$. It is not a vacuous condition because if $\mech_i$ uses strategy matrix $\bmat_i$ and covariance matrix $\covar_i$ then its privacy cost is $\max_j \bmat_i^\top\covar_i^{-1}\bmat_i)[j,j]$ (Definition \ref{def:glm}). However, the overall privacy cost (Theorem \ref{thm:overall}) is:
\begin{align}
    \pcostbound &= \max_j \left(\sum_{i=1}^m \bmat_i^\top\covar_i^{-1}\bmat_i\right)\![j,j] \nonumber\\
    &\leq \sum_{i=1}^m \max_j \left(\bmat_i^\top\covar_i^{-1}\bmat_i\right)\![j,j]\label{eq:splittable}\\
    &=\sum_{i=1}^m \pcost(\mech_i)\nonumber
\end{align}
Thus the overall privacy cost, in general, can be smaller than the sum of the individual privacy costs of the mechanisms.

\begin{definition}\label{def:splittable}
    A matrix mechanism is \textbf{splittable} if, when given a target privacy cost bound $\pcostbound$, it creates a collection of gaussian linear mechanisms and the sum of their individual privacy costs is $\pcostbound$.
\end{definition}

The RP+ algorithm is explicitly designed to be splittable. The linear mechanisms used by RP all have constant diagonals for $\bmat_i^\top\covar^{-1}\bmat_i$ and so clearly Equation \ref{eq:splittable} holds with equality, so RP is splittable. When WFF adds noise to Fourier coefficients, the complex part of the coefficient gets the same variance as the real part of the coefficient. As a result, the privacy cost matrix of each gaussian linear mechanism it uses has a constant diagonal and therefore, like RP, it is splittable.

Our final theoretical result is that if a competing matrix mechanism  is splittable and every gaussian linear mechanism it uses respects residual spaces then the weighted sum squared error over $\workload$ that it achieves cannot be smaller than the weighted sum squared error of \sysname (provided that \sysname uses an optimal solver for Equation \ref{eqn:semidef}). In particular each gaussian linear mechanisms it uses is a feasible (and potentially suboptimal) solution to the matrix mechanism construction semidefinite program (Equation \ref{eqn:semidef}) for some subworkload $\workload_{\deco\attset}$.

\begin{theoremE}[][category=accuracy]\label{thm:accuracy}
Given a workload $\workload$ of linear queries that can be answered from data marginals, and a privacy cost bound $\pcostbound$, let \textbf{Compete} denote a matrix mechanism that can answer the queries in $\workload$, is splittable, and which uses gaussian linear mechanisms that all respect residual spaces. Then:
\begin{itemize}[leftmargin=*]
    \item The weighted sum-squared error of \textbf{Compete} over $\workload$ is equal to the sum of the weighted sum-squared errors of $\textbf{Compete}$ on the subworkloads created by \sysname.
    \item For any given $\attset$, let $\mech_1,\dots, \mech_m$ be the subset of Gaussian linear mechanisms of \textbf{Compete} that are answerable from the marginal $\datavec_\attset$. Then $\mech_1,\dots, \mech_m$ together correspond to a feasible (but potentially suboptimal) solution to the optimization problem in Equation \ref{eqn:semidef} for subworkload $\workload_{\deco\attset}$ (i.e., the $\mat{W}$ in that equation is formed from the query vectors in $\workload_{\deco\attset}$).
    \item When \sysname uses solvers that solve Equation \ref{eqn:semidef} to optimality, the weighted sum squared error of \sysname over $\workload$ is $\leq$ the weighted sum-squared error of \textbf{Compete} over $\workload$.
\end{itemize}
\end{theoremE}
\begin{proof}[Proof sketch] (see \shortlong{the full version \cite{qsarxiv}}{the appendix} for the full proof).
The main steps are to show that for any non-empty subworkload generated by \sysname, a subset of the strategy queries used by \textbf{Compete} form a linear basis for the space spanned by that subworkload. Furthermore, the strategy queries from \textbf{Compete} that are assignable to one subworkload are linearly independent from those assignable to a different subworkload. Then, like with Theorem \ref{thm:decomp}, answering a query $\queryv$ using \textbf{Compete} is equivalent to using \textbf{Compete} to answer the subqueries that $\queryv$ is decomposed into, and adding up the answers. Since \sysname is optimal for the subqueries, it then has lower error than \textbf{Compete}.
\end{proof}
\begin{proofE}
    First, we note that if $\mech_1,\dots,\mech_m$ are answerable from marginal $\datavec_\attset$ then they can be turned into a single matrix mechanism $\mech_\attset$ that is answerable from $\datavec_\attset$. Indeed, let $\bmat_1,\dots,\bmat_m$ be their respective strategy matrices and $\covar_1,\dots,\covar_m$ be their covariance matrices. Form the matrix $\bmat_\attset$ by stacking the $\bmat_i$:
    \begin{align*}
        \bmat_\attset &=\left(
        \begin{matrix}
            \bmat_1\\ \bmat_2\\ \vdots \\ \bmat_m
        \end{matrix}
        \right)
    \end{align*}
    and form $\covar_\attset$ as the block diagonal matrix:
    \begin{align*}
        \covar_\attset &=\left(
        \begin{matrix}
            \covar_1 & \mat{0} &\cdots & \mat{0}\\
            \mat{0} & \covar_2 &\ddots & \mat{0}\\
            \vdots & \vdots & \ddots & \vdots\\
            \mat{0} & \mat{0} & \cdots &\covar_m
        \end{matrix}
        \right)
    \end{align*}
    and define $\mech_\attset(\datavec_\attset)\equiv \bmat_\attset \datavec_\attset + N(\mat{0},\covar_\attset)$. Then it is easy to see that the distribution of the output
    of $\mech_\attset$ is the same as the joint distribution of the outputs of $\mech_1,\dots, \mech_m$.

    Our next observation is that we can convert $\mech_\attset$ into a mechanism that has a diagonal covariance matrix. Define $\bmat^*_\attset=\covar_\attset^{-1/2}\bmat_\attset $, where $\covar_\attset^{1/2}$ is the matrix square root ($\covar_\attset^{1/2}\covar_\attset^{1/2}=\covar_\attset$). $\covar_\attset^{1/2}$ exists because $\covar$ is symmetric positive definite as it is a covariance matrix for Gaussian noise. 

    Define the gaussian linear mechanism $\mech^*_\attset(\datavec_\attset)\equiv\bmat^*_\attset\datavec_\attset + N(\mat{0},\identity)$. Then it is easy to check that:
    \begin{itemize}[leftmargin=*]
        \item The output distributions of $\mech_\attset(\datavec_\attset)$ and $\covar^{1/2}_\attset \mech^*_\attset(\datavec_\attset)$ (i.e., run $\mech^*_\attset$ and multiply the result by $\covar^{1/2}$) are the same.
        \item The output distributions of $\covar^{-1/2}_\attset\mech_\attset(\datavec)$ and $\mech^*_\attset(\datavec)$ are the same.
    \end{itemize}
    Thus $\mech_\attset$ and $\mech^*_\attset$ can be obtained from each other by postprocessing, and so by the postprocessing invariance of differential privacy, both mechanisms have the same privacy cost. Furthermore, in whatever situation one needs to use the output of $\mech$ (e.g., in the reconstruction function), one can instead use the output of $\mech^*$ and left multiply it by $\covar^{-1/2}_\attset$.

    Next, every $\mech^*_\attset$, whose input should be $\datavec_\attset$ can be converted into a mechanism $\mech^{full}_\attset$ that operates on the full data $\datavec$. Specifically, define:
    \begin{align*}
    \mquery{\attset} &= \mat{V}_1 \kron \mat{V}_2\kron \cdots \kron \mat{V}_\numattr\\
    \text{where } \mat{V}_i &=
    \begin{cases}
        \identity_{\attsize_i} & \text{ if }\attr_i\in\attset\\
        \one^T_{\attsize_i} &\text{ otherwise }
    \end{cases}\\
    \mech^{full}_\attset(\datavec) &= \bmat^*_\attset \mquery{\attset}\datavec + N(\mat{0}, \identity)
    \end{align*}
    and similarly any query $\queryv$ over $\datavec$ can be converted into a query $\queryv^{full}$ over $\data$ as follows:
    $\queryv^{full}=\queryv \mquery{\attset}$.
    (note $\queryv\cdot\datavec_\attset=\queryv^{full}\cdot\datavec$).

\vspace{1em}\noindent\textbf{Step 1: optimal reconstruction.}
We next consider what is the optimal linear reconstruction function.
First, note that  the row spaces of $\bmat_\attset$ and $\bmat^*_\attset$ are the same, and since residual subsapces are respected, those row spaces are
 are contained in the row space of 
\begin{multline*}
Proj_\attset\equiv\left(\identity_{\attsize_{i_1}}-\tfrac{1}{\attsize_{i_1}}\one_{\attsize_{i_1}}\one^\top_{\attsize_{i_1}}\right)\kron \cdots \\
\kron \left(\identity_{\attsize_{i_\ell}}-\tfrac{1}{\attsize_{i_\ell}}\one_{\attsize_{i_\ell}}\one^\top_{\attsize_{i_\ell}}\right).
\end{multline*}
where $\attset=\{\attr_{i_1},\dots, \attr_{i_\ell}\}$.

Since $Proj_\attset$ is an idempotent matrix ($Proj_\attset Proj_\attset=Proj_\attset$), it means that $\bmat^*_\attset Proj_\attset=\bmat^*_\attset$.

Define
\begin{align*}
    \mat{Q}^{resid}_\attset &\equiv Proj_\attset \mquery{\attset}
    = \mat{V}_1 \kron \cdots \kron \mat{V}_\numattr\\
    \text{where } \mat{V}_i &=
    \begin{cases}
        \identity_{\attsize_i}-\tfrac{1}{\attsize_i}\one_{\attsize_i}\one^\top_{\attsize_i} & \text{if }\attr_i\in\attset\\
        \one^T_{\attsize_i} &\text{otherwise}
    \end{cases}
\end{align*}
and this means $\mat{Q}^{resid}_\attset (\mat{Q}^{resid}_{\attset^\prime})^\top=\mat{0}$ when $\attset^\prime\neq\attset$.

Combining these facts, for any $\attset^\prime\neq\attset$,
\begin{align*}
    &(\bmat^*_\attset \mquery{\attset})(\bmat^*_{\attset^\prime} \mquery{\attset^\prime})^\top \\
    &=(\bmat^*_\attset Proj_\attset\mquery{\attset})(\bmat^*_{\attset^\prime}Proj_{\attset^\prime} \mquery{\attset^\prime})^\top\\
    &=(\bmat^*_\attset \mat{Q}^{resid}_{\attset})(\bmat^*_{\attset^\prime} \mat{Q}^{resid}_{\attset^\prime})^\top\\
    &=\mat{0}
\end{align*}
Thus $\mech^{full}_\attset$ and $\mech^{full}_{\attset^\prime}$ use orthogonal (hence independent) strategy matrices and hence access orthogonal information from $\datavec$.

Next, if \sysname generates a non-empty subworkload  $\workload_{\deco\attset}$ then \textbf{Compete} must have a corresponding $\mech^{full}_\attset$ (since a non-empty subworkload $\workload_{\deco\attset}$ implies $\queryv^{full}$ for some $\queryv\in\workload$ must have a component in a direction orthogonal to all of the strategy matrices of $\mech^{full}_{\attset^\prime}$ for $\attset^\prime\neq\attset$, since the mechanisms of \textbf{Compete} respect residual spaces). Furthermore, if for some $\attset$ there exists an \textbf{empty} subworkload  $\workload_{\deco\attset}$, then there shouldn't be any $\mech^{full}_\attset$ as its strategy matrix would be orthogonal to all queries (hence running such a mechanism would consume privacy budget for no improvement in accuracy).

Hence, define \textbf{Good} to be the set of all $\attset$ for which $\workload_{\deco\attset}$ is nonempty. The previous discussion means that \textbf{Compete} has a mechanism $\mech^{full}_{\attset}$ if and only if $\attset\in\textbf{Good}$.

Since the $\mech^{full}_{\attset}$ have independent noise and independent strategy matrices, then given their noisy outputs $\vec{z}_\attset = \mech^{full}_{\attset}(\datavec)$ for $\attset\in\textbf{Good}$, the optimal linear reconstruction of the answer to a $\queryv\in \workload$ that is answerable from some $\datavec_{\attset^*}$ is the query times the sum of the strategy pseudoinverse times the noisy answers:
\begin{align*}
    &\queryv^{full}\left(\sum_{\attset\in\textbf{Good}} (\bmat^*_\attset \mquery{\attset})^\dagger\vec{z}_\attset\right)\\
    &\queryv\mquery{\attset^*}\left(\sum_{\attset\in\textbf{Good}} (\bmat^*_\attset \mquery{\attset})^\dagger\vec{z}_\attset\right)\\
     &\queryv\mquery{\attset^*}\left(\sum_{\attset\in\textbf{Good}} (\bmat^*_\attset \mat{Q}^{resid}_{\attset})^\dagger\vec{z}_\attset\right)\\
     &=\sum_{\attset\in\textbf{Good}}\queryv\mquery{\attset^*}(\mat{Q}^{resid}_{\attset})^\dagger(\bmat^*_\attset)^\dagger\vec{z}_\attset
\end{align*}
Next, we note that
\begin{align*}
    (\mat{Q}^{resid}_\attset)^\dagger
    &= \mat{V}_1 \kron \cdots \kron \mat{V}_\numattr\\
    \text{where } \mat{V}_i &=
    \begin{cases}
        \identity_{\attsize_i}-\tfrac{1}{\attsize_i}\one_{\attsize_i}\one^\top_{\attsize_i} & \text{if }\attr_i\in\attset\\
        \tfrac{1}{\attsize_i}\one_{\attsize_i} &\text{otherwise}
    \end{cases}
\end{align*}
so
\begin{align*}
    \mquery{\attset^*}&(\mat{Q}^{resid}_{\attset})^\dagger =\mat{W}_1\kron\cdots\kron\mat{W}_\numattr\\
    \text{where }& \mat{W}_i=
    \begin{cases}
        1 & \text{if }\attr_i\notin \attset^*\!\cup\attset\\
        \mat{0} & \text{if }\attr_i\in \attset\setminus \attset^*\\
        \tfrac{1}{\attsize_i}\one_{\attsize_i} & \text{if }\attr_i\in\attset^*\setminus\attset\\
        \identity_{\attsize_i}-\tfrac{1}{\attsize_i}\one_{\attsize_i}\one^\top_{\attsize_i} & \text{if }\attr_i\in\attset^*\cap\attset
    \end{cases}
\end{align*}
and so $\mquery{\attset^*}(\mat{Q}^{resid}_{\attset})^\dagger$ equals the 0 matrix if $\attset\not\subseteq\attset^*$
and otherwise is the matrix from Equation \ref{eq:vecdecomp} used to project $\queryv$ into $\queryv_{\deco\attset}$.

Hence, the reconstruction of the answer to a $\queryv\in \workload$ that is answerable from some $\datavec_{\attset^*}$
is 
\begin{align*}
    &\queryv^{full}\left(\sum_{\attset\in\textbf{Good}} (\bmat^*_\attset \mquery{\attset})^\dagger\vec{z}_\attset\right)\\
     &=\sum_{\attset\in\textbf{Good}}\queryv\mquery{\attset^*}(\mat{Q}^{resid}_{\attset})^\dagger(\bmat^*_\attset)^\dagger\vec{z}_\attset\\
     &=\sum_{\attset\subseteq\attset^*}\queryv_{\deco\attset}(\bmat^*_\attset)^\dagger\vec{z}_\attset
\end{align*}
we also note that $\queryv_{\deco\attset}(\bmat^*_\attset)^\dagger\vec{z}_\attset$ is the optimal reconstruction
(based on the pseudoinverse) of $\queryv_{\deco\attset}$ based on the output $\vec{z}_\attset$ of $\mech^*_{\attset}$.

Thus for \textbf{Compete}, just like for \sysname, the reconstructed answer to $\queryv$ is the sum of the reconstructed answers to its projected subqueries. Furthermore, since each projected subquery is answered using noise that is independent of the other subqueries (because it comes from a separate gaussian linear mechanism), the variance of $\queryv$ is the sum of the variances of the subqueries. Since $\queryv$ and its subqueries also have the same weights, then the weighted sum squared error of reconstructed answers to queries in $\workload$ is equal to the sum of the weighted sum-squared errors of reconstructed answers to the subworkloads. 

This proves the first claim of the theorem.

\vspace{1em}\noindent\textbf{Step 2: separate subworkload optimization.}
The previous results showed that the role of $\mech^*_\attset$ is to serve as a gaussian linear mechanism
from whose output the answers to queries in $\workload_{\deco\attset}$ are reconstructed. If $\beta_\attset$ is the privacy
cost of $\mech^*_\attset$ then rescaling the entries of $\bmat^*_\attset$  by $1/\sqrt{\beta_\attset}$ results in a gaussian linear mechanism whose privacy cost is 1. Hence $(\bmat^*_\attset)^T(\bmat^*_\attset)/\beta_\attset$ is a feasible solution to the matrix $\mat{V}$ in Equation \ref{eqn:semidef}. This proves the second claim.

Hence, if we obtain the optimal matrix mechanism for the subworkload $\workload_{\deco\attset}$ and rescale its noise variance by $1/\beta_{\attset}$ (so that it has privacy cost $\beta_\attset$), and then use that in place of $\mech^*_\attset$, the weighted sum-squared error over the subworkload cannot increase. If we then replace the $1/\beta_{\attset}$ rescaling by the optimal rescaling used by \sysname, the overall weighted sum-squared error cannot increase. This proves the third claim.

\end{proofE}


\section{Experiments}\label{sec:experiments}
%

In this section, we present our experimental results.

\textbf{Baselines.} We compare
\sysname (\textbf{QS}) to ResidualPlanner (\textbf{RP}) \cite{rplanner}, ResidualPlanner+ (\textbf{RP+}) \cite{rplus}, \textbf{HDMM} \cite{mckenna2018optimizing}, and
Weighted Fourier Factorization (\textbf{WFF}) \cite{lebeda2025weightedfourierfactorizationsoptimal}. Public implementations of prior work do not support all of the query workloads we consider, so we take the following approach. RP only supports marginals but is completely subsumed by RP+ (hence comparison to RP can be dropped). Furthermore, we modifed the RP+ implementation to support a wider set of queries. WFF does not have a public implementation so we re-implemented
it (a practical application of Theorem \ref{thm:accuracy}). We did not modify HDMM code and so query workloads that are not supported by the implementation are marked as NI (not implemented).

\textbf{Implementation details.}
We implemented \sysname in Python~3.11 using
NumPy, SciPy, and PyTorch.
All computations are CPU-only; our experiments are run on a dual-socket
AMD EPYC 7763 server (128 cores, 1\,TB RAM) running Linux and the
residual spaces are optimized sequentially.
%

\subsection{Workload queries.}
We evaluate the matrix mechanisms on a wider variety of queries than prior work. Specifically, we consider the following:
%
\begin{itemize}[nosep,leftmargin=*]
  \item \textbf{Marginal} queries. One-way marginals have the form ``how many records have $\attr_i=c$'' (for each choice of c), two-way marginals have the form ``how many records have $\attr_i=c_1$ and $\attr_j=c_2$'', etc. Note that RP, RP+, WFF, and \sysname are all optimal for pure marginal workloads, so this serves as a sanity check.
  \item \textbf{Prefix}-sum queries: these are queries of the form ``how many records have $\attr_i\leq c$'' for different values of $c$. Two-dimensional prefix-sum queries have the form ``how many records have $\attr_{i}\leq c_1$ and $\attr_{j}\leq c_2$,'' etc.
  \item \textbf{Range} queries: these are axis-aligned queries of the form ``how many records have $\attr_i\in [c_1, c_2]$'' and are extended to multiple dimensions in the obvious way  (e.g., ``how many records have $\attr_i\in [c_1, c_2]$ and $\attr_j\in [c_3, c_4]$'').
  \item\textbf{Circular} range queries: these are range queries that wrap around like on a clock (e.g., ``how many checkins are between ``10pm and 1am'') and are extended to multiple dimensions in the obvious way. WFF \cite{lebeda2025weightedfourierfactorizationsoptimal} is provably optimal for these queries and therefore so is \sysname.
  \item \textbf{Affine} queries: these are two-dimensional queries that are \textbf{not} axis-aligned. They are answerable from two-way marginals and have the form ``how many records have $\attr_i+\attr_j\leq c$'' for different choices of $c$. These queries allow for cross-attribute comparisons, which prior work did not consider.
  \item \textbf{Abs} queries: these are also two-dimensional queries that are \textbf{not} axis-aligned. They allow cross-attribute comparisons and have the form ``how many records have $|\attr_i-\attr_j|\leq c$'' for different choices of $c$.
  \item \textbf{Random}: finally we consider random counting queries that are formed by taking a query vector $\queryv$, initially filled with zeros, and then independently flipping each bit with probability $p$. This query type simulates queries without any particular structure. Noting that there is no universally accepted choice of $p$, we chose  $p=0.3$ to get somewhat sparse query vectors.
\end{itemize}
%

\subsection{Metrics}
In our experiments, each matrix mechanism is required to optimize its workload at a privacy cost $\pcostbound=1$. We  measure the root-mean-squared error:

\begin{align*}
    RMSE &=\sqrt{\frac{\sum\limits_{\queryv\in\workload}\varfun(\queryv)}{|\workload|}}
\end{align*}

\subsection{Executive Summary}\label{sec:executive}
For the accuracy experiments, we evaluate all methods on real data in Section \ref{sec:exp:real}
and provide more extensive experiments on synthetic data in Section \ref{sec:exp:synthetic}. Scaling
experiments are provided in Section \ref{sec:exp:scale}.
The high-level summary of  our observations from those sections are:
\begin{itemize}[leftmargin=*]
    \item Among the baselines (RP+, WFF, HDMM), each has its own specialty: HDMM performs very well on marginal queries, RP+ performs well on marginals and prefix sums. WFF is provably optimal for marginals and circular range queries \cite{lebeda2025weightedfourierfactorizationsoptimal}; it also performs well for range queries and random queries (but struggles with prefix queries).
    \item \sysname consistently has the lowest RMSE for all workloads. It matches WFF in places where it is optimal (marginals and circular range queries) and is strictly better than all methods on the rest of the queries.
    \item None of the competitors (RP+, WFF, HDMM) perform well on \textbf{affine} and \textbf{abs} queries, and \sysname shows significant improvements for those query types. It also outperforms HDMM and WFF on prefix queries by a large margin and outperforms RP+ on prefix queries by a smaller margin.
    \item In terms of scaling, the dominant cost arises in optimizing a subworkload, as deploying the optimal solver requires solving a semi-definite optimization program. The resource requirements only grow with the size of the output (e.g., size of the workload query answers) rather than with the domain size (which is exponentially large). To put it into context, consider the following two results from Section \ref{sec:exp:scale}.
    \begin{itemize}[leftmargin=*]
    \item Consider a complex workload consisting of all 1-d range queries, all 2-d affine queries, and all 3-d prefix queries.
    Consider a synthetic dataset with 10 attributes, with each attribute having a domain size of 40. Thus, the overall domain size of a tuple is $10^{40}$. For the workload on such a dataset, there are ${10 \choose 3}=120$ groups of 3-d prefix queries (i.e., $120$ ways of choosing 3 attributes), ${10 \choose 2}=45$ groups of 2-d affine queries, and ${10\choose 1}=10$ groups of 1-d range queries. Hence, answering this workload, even in the non-private setting, requires computing $120+45+10=175$ views of the dataset. For \sysname the solver time was 60 seconds and required 3.4\,GB of memory.
    \item If the number of dimensions is increased to 40, the overall domain size of a tuple becomes $40^{40}\approx 1.2\times 10^{64}$. Now there are  ${40 \choose 3}=9880$ groups of 3-d prefix queries (i.e., $9880$ ways of choosing 3 attributes), ${40 \choose 2}=780$ groups of 2-d affine queries, and ${40\choose 1}=40$ groups of 1-d range queries. Hence answering this workload requires computing $9880 + 780 + 40 = 10700$ views of the data. This is now a $\frac{10700}{175}\approx 61$ times larger problem. The solve time is 3032 seconds (which is only $\frac{3032}{60}\approx 51$ times larger) and the memory requirement is 227\,GB (which is only $\frac{227}{3.4}\approx 67$ times larger).
    \end{itemize}
    We note that in addition to parallelizing the solve phase of \sysname, one can also replace the optimal low-dimensional subworkload optimizer (Section \ref{sec:solvers}) with an approximate optimizer. Hence, any future research improvements in low-dimensional matrix mechanisms will directly benefit \sysname as well. We next consider the experiments in detail.
\end{itemize}



\subsection{Utility Experiments for Real Data}\label{sec:exp:real}

Following prior work~\cite{rplus,mckenna2018optimizing}, we consider the following real dataset schemas.
\textbf{Adult} (9 categorical attributes with domain sizes 42, 16, 15, 9, 7, 6, 5, 2, 2 and 5 numeric attributes with domain sizes 100, 100, 100, 99, 85) \cite{adult_2}, \textbf{CPS} (3 categorical attributes with domain sizes 7, 4, 2 and 2 numeric attributes with sizes 50, 100) \cite{mckenna2018optimizing}, and \textbf{Loans} (8 categorical attributes with domain sizes 51, 36, 15, 8, 6, 5, 4, 3 and 4 numeric attributes with sizes 101, 101, 101, 101) \cite{kaggleloans}.
Following prior work, our first experiment uses the hybrid marginal/prefix workload.
For example, a 1-way hybrid query on attribute $\attr_i$ has the form
``how many records have $\attr_i$ op$_1$ $c_1$'', where op$_1$ is ``='' if $\attr_i$ is categorical
and ``$<$'' if $\attr_i$ is numeric. Then 2-way hybrid queries have the form
``how many records have $\attr_i$ op$_1$ $c_1$ and $\attr_j$ op$_2$ $c_1$'' and so on.
The RMSE of the different methods is shown in Table \ref{tab:prefix-workloads}.

\begin{table}[htbp]
\centering
\caption{RMSE for hybrid marginal/prefix workloads at privacy cost 1.}
\label{tab:prefix-workloads}
\footnotesize
\resizebox{\columnwidth}{!}{%
\begin{tabular}{ll r rrrr}
\hline
Dataset & Workload & \#Queries & RP+ & HDMM & WFF & QS \\
\hline
Adult
  & 1-way       &        588 & 5.11  & 5.648  & 5.804  & \textbf{5.047}  \\
  & 2-way       &    148{,}137 & 17.63 & 21.111 & 22.493 & \textbf{17.632} \\
  & 3-way       & 20{,}894{,}536 & 47.19 & 60.025 & 66.934 & \textbf{47.055} \\
  & $\leq$3-way & 21{,}043{,}261 & 48.90 & 61.613 & 66.794 & \textbf{47.853} \\
\hline
CPS
  & 1-way       &        163 & 3.18 & 3.347 & 3.589 & \textbf{3.135} \\
  & 2-way       &      7{,}000 & 6.36 & 6.625 & 7.841 & \textbf{6.194} \\
  & 3-way       &     72{,}556 & 8.12 & 8.634 & 10.267 & \textbf{7.903} \\
  & $\leq$3-way &     79{,}719 & 8.39 & 8.623 & 10.594 & \textbf{8.140} \\
\hline
Loans
  & 1-way       &        532 & 4.73 & 5.127 & 5.330 & \textbf{4.670} \\
  & 2-way       &    118{,}974 & 14.91 & 17.149 & 18.736 & \textbf{14.822} \\
  & 3-way       & 14{,}539{,}522 & 36.11 & 42.938 & 49.823 & \textbf{36.095} \\
  & $\leq$3-way & 14{,}659{,}028 & 36.65 & 44.338 & 49.721 & \textbf{36.410} \\
\hline
\end{tabular}}
\end{table}
This is a relatively easy workload, with \sysname consistently having the lowest RMSE, RP+ being slightly worse, while HDMM and WFF struggle with 3-way queries on the larger dataset schemas (Adult and Loans).
To accommodate additional  query classes, we consider an experiment where all the attributes are treated as numeric and we evaluate the RMSE on all 2-way \textbf{affine} and 2-way \textbf{Abs} queries. The results are shown in Table \ref{tab:real-aff-abs}.

\begin{table}[htbp]
\centering
\caption{RMSE for 2-way Affine and abs-diff workloads. \sysname improves RMSE by $15-21\%$.}
\label{tab:real-aff-abs}
\footnotesize
\resizebox{\columnwidth}{!}{%
\begin{tabular}{ll r rrrr}
\hline
Dataset & Workload & \#Queries & RP+ & HDMM & WFF & QS \\
\hline
CPS
  & affine      &        805 & 10.012 & NI & 7.147 & \textbf{5.935} \\
  & abs\_diff   &        731 & 10.476 & NI & 7.387 & \textbf{5.900} \\
\hline
Adult
  & affine      &    99{,}830 & 19.395 & NI & 21.559 & \textbf{16.435} \\
  & abs\_diff   &    99{,}200 & 19.441 & NI & 21.587 & \textbf{16.254} \\
\hline
Loans
  & affine      &    66{,}318 & 18.082 & NI & 18.671 & \textbf{14.305} \\
  & abs\_diff   &    65{,}682 & 18.148 & NI & 18.714 & \textbf{14.330} \\
\hline
\end{tabular}}
\end{table}

We see from Table \ref{tab:real-aff-abs} that \sysname consistently outperforms the competing methods
while reducing the noise between $15-21\%$ compared to the next best method. As these are relatively small
datasets, we next consider more systematic evaluations on synthetic data.

\subsection{Utility Experiments for Synthetic Data}\label{sec:exp:synthetic}

We present consider synthetic datasets whose schemas have the form $[n]^d$. That is, there
are $d$ attributes and each attribute has a domain size of $n$. Thus the overall domain
of a tuple is $n^d$. We evaluate all of the workload types as $n$ and $d$ vary.

\begin{table}[htbp]
\centering
\caption{RMSE of 1-way + 2-way workloads at privacy cost$=1$, for different query types, on a synthetic dataset with $40$ attributes ($d=40$) 
as $n$ (domain size of an attribute) varies. NI indicates that competing methods do not have an implementation for the query type. For HDMM, we added
a * to indicate that it can estimate its RMSE but will run out of memory when providing workload query answers.}
\label{tab:2way-varying-n}
\small
\resizebox{\columnwidth}{!}{%
\begin{tabular}{r l r rrrr}
\hline
$n$ & Workload & \#Queries & RP+ & HDMM & WFF & QS \\
\hline
10 & marginal    & 78{,}400 & \textbf{23.48} & \textbf{23.48}${}^*$ & \textbf{23.48} & \textbf{23.48} \\
   & prefix      & 78{,}400 & 36.23 & 52.35${}^*$ &      39.70 &      \textbf{33.70} \\
   & range       & 2{,}361{,}700 & 48.95  & 68.98${}^*$  &    41.36 &    \textbf{41.08} \\
   & circular & 7{,}804{,}000 & NI  & NI  &    \textbf{39.77} &    \textbf{39.77} \\
   & affine      & 15{,}220 & 62.48  & NI  &      45.23 &       \textbf{28.25} \\
   & abs   & 8{,}200 & 85.99  & NI  &      64.11 &       \textbf{35.85} \\
   & random      & 235{,}200 & NI  & NI  &     107.81 &     \textbf{104.43} \\
\hline
20 & marginal    & 312{,}800 & \textbf{25.70} & \textbf{25.70}${}^*$ & \textbf{25.70} & \textbf{25.70} \\
   & prefix      & 312{,}800 & 52.63 & 80.01${}^*$ &     62.95 &     \textbf{49.51} \\
   & range       & 34{,}406{,}400 & 75.44  & 109.56${}^*$  &   63.58 &   \textbf{63.32} \\
   & circular & 124{,}816{,}000 & NI  & NI  &   \textbf{63.01} &   \textbf{63.01} \\
   & affine      & 31{,}220 & 110.30  & NI  &      70.31 &      \textbf{35.71} \\
   & abs   & 16{,}400 & 152.41  & NI  &      102.54 &      \textbf{39.49} \\
   & random      & 938{,}400 & NI  & NI  &   234.89 &   \textbf{233.69} \\
\hline
30 & marginal    & 703{,}200 & \textbf{26.46} & \textbf{26.46}${}^*$ & \textbf{26.46} & \textbf{26.46} \\
   & prefix      & 703{,}200 & 63.10 & 98.35${}^*$ &    79.09 &    \textbf{60.81} \\
   & range       & 168{,}674{,}100 & 93.22  & 137.92${}^*$  &   79.02 &   \textbf{78.79} \\
   & circular & 631{,}836{,}000 & NI  & NI  &  \textbf{79.14} &  \textbf{79.14} \\
   & affine      & 47{,}220 & 149.68  & NI  &     87.62 &     \textbf{44.36} \\
   & abs   & 24{,}600 & 207.22  & NI  &     129.31 &     \textbf{48.14} \\
   & random      & 2{,}109{,}600 & NI  & NI  &   362.86 &   \textbf{362.01} \\
\hline
40 & marginal    & 1{,}249{,}600 & \textbf{26.84} & \textbf{26.84}${}^*$ & \textbf{26.84} & \textbf{26.84} \\
   & prefix      & 1{,}249{,}600 & 70.95 & 112.33${}^*$ &    91.67 &    \textbf{68.78} \\
   & range       & 524{,}504{,}800 & 106.85  & 160.16${}^*$  &  91.11 &  \textbf{90.91} \\
   & circular & 1{,}996{,}864{,}000 & NI & NI & \textbf{91.72} & \textbf{91.72} \\
   & affine      & 63{,}220 & 184.38 & NI &    101.10 &    \textbf{69.62} \\
   & abs   & 32{,}800 & 255.57 & NI &    150.21 &    \textbf{49.83} \\
   & random      & 3{,}748{,}800 & NI & NI &  490.66 &  \textbf{490.65} \\
\hline
50 & marginal    & 1{,}952{,}000 & \textbf{27.07} & \textbf{27.07}${}^*$ & \textbf{27.07} & \textbf{27.07} \\
   & prefix      & 1{,}952{,}000 & 77.29 & 123.73${}^*$ &  102.09 &  \textbf{75.26} \\
   & range       & 1{,}268{,}038{,}500 & 118.00 & 178.65${}^*$ &  101.15 &  \textbf{100.97} \\
   & circular & 4{,}875{,}100{,}000 & NI & NI & \textbf{102.13} & \textbf{102.13} \\
   & affine      & 79{,}220 & 215.98 & NI &  112.23 &  \textbf{79.33} \\
   & abs   & 41{,}000 & 299.62 & NI &  167.51 &  \textbf{52.80} \\
   & random      & 5{,}856{,}000 & NI & NI &  618.61 &  \textbf{616.54} \\
\hline
\end{tabular}}
\end{table}

 Table \ref{tab:2way-varying-n} considers the RMSE of 1-way and 2-way workloads, while 
 Table \ref{tab:mixed-scaling-d-accuracy} considers 1-way + 2-way + 3-way workloads of
 mixed query types. We note that HDMM and RP+ not have implementation support for
 some of the workloads, and we mark such cases in a table cell as NI (not implemented).
 We also note that HDMM cannot fully run on datasets of this scale -- it runs out of memory
 before it can reconstruct noisy workload query answers (since it scales as $O(n^d)$). Nevertheless,
 it can calculate the RMSE it would have in the ideal case of unbounded computational resources. Hence
 we add a * to cell entries where the RMSE is predictable but the algorithm cannot run to completion.

Table \ref{tab:2way-varying-n} shows that \sysname has the lowest RMSE among all methods across all workloads.
Among the competitors, WFF is provably optimal (among matrix mechanisms) for range queries and circular queries \cite{lebeda2025weightedfourierfactorizationsoptimal} and is nearly optimal for range queries.
Meanwhile, RP+ has strong performance for prefix queries (as well as being optimal for marginals \cite{rplus}).
\sysname matches the optimal performance on marginals and circular queries, and is strictly better on all of the
other query types. The largest sources of improvements are for \textbf{Abs} and \textbf{Affine}. 

\begin{table}[htbp]
\centering
\caption{RMSE on mixed-query workloads of
  1-way range queries, 2-way affine queries, 3-way prefix queries
   at privacy cost~1 as $d$ (number of dataset attributes) and $n$ (domain size of each attribute) vary.
}
\label{tab:mixed-scaling-d-accuracy}
\footnotesize
\begin{tabular}{rr rrrr}
\hline
$n$ & $d$ & RP+ & HDMM & WFF & QS \\
\hline
10 & 10 & 22.49 & NI & 25.37 & \textbf{20.41} \\
   & 20 & 57.17 & NI & 65.13 & \textbf{51.63} \\
   & 30 & 99.66 & NI & 113.98 & \textbf{93.50} \\
   & 40 & 148.46 & NI & 170.18 & \textbf{138.38} \\
   & 50 & 202.72 & NI & 232.72 & \textbf{187.24} \\
\hline
20 & 10 & 36.82 & NI & 47.04 & \textbf{34.60} \\
   & 20 & 97.22 & NI & 125.94 & \textbf{95.63} \\
   & 30 & 172.22 & NI & 224.36 & \textbf{167.16} \\
   & 40 & 259.00 & NI & 338.53 & \textbf{249.29} \\
   & 50 & 355.93 & NI & 466.26 & \textbf{340.55} \\
\hline
30 & 10 & 47.08 & NI & 64.37 & \textbf{44.46} \\
   & 20 & 126.20 & NI & 175.18 & \textbf{126.19} \\
   & 30 & 224.99 & NI & 314.28 & \textbf{221.80} \\
   & 40 & 339.60 & NI & 476.12 & \textbf{331.86} \\
   & 50 & 467.86 & NI & 657.56 & \textbf{454.37} \\
\hline
\end{tabular}
\end{table}

Table \ref{tab:mixed-scaling-d-accuracy} presents a more challenging workload
that combines all 1-way range queries, all 2-way affine queries, and all 3-way prefix-sum
queries. As the data dimensionality $d$ and attribute size $n$ vary, \sysname 
shows consistent and large improvements in RMSE.

\subsection{Scalability Experiments}\label{sec:exp:scale}

\sysname has a query decomposition phase, a subworkload optimization phase,
and the assembly phase which computes privacy-preserving queries over the data and
then reconstructs the workload query answers. We next test the scalability of these
phases with results shown in Tables \ref{tab:scaling-nd-timing} and \ref{tab:mixed-scaling-d-timing}.
In these experiments, we consider datasets with $d$ attributes, where each attribute has a domain size $n$ (we vary $d$ and $n$).
Hence, the tuple domain is $n^d$. Table \ref{tab:scaling-nd-timing} considers the easier workload that consists
of all 1-way and 2-way prefix sum queries. Table  \ref{tab:mixed-scaling-d-timing} considers a more challenging
workload that consists of all 1-way range queries, all 2-way affine queries, and all 3-way prefix sum queries.

The tables report the running time in seconds for the decomposition (Decomp.), subworkload optimization (Solve)
and the assembly (Assem.) phases. They also report memory usage. To help understand scaling behavior,
we also included a \textbf{\#Views} column which refers to the number of data
views that would be necessary for answering the workload even in the non-private case. For example, 
when the number of attributes $d=40$, then there are ${40\choose 3}=9880$ types of three-way prefix-sum queries (i.e., $9880$ ways of choosing 3 attributes), ${40\choose 2}=780$ types of two-way affine queries and ${40\choose 1}=40$ types of one-way range queries. Hence $9880+780+40=10700$ dataset views (group-by count queries) on those different attributes are needed to answer the workload even in the non-private case.

In Tables \ref{tab:scaling-nd-timing} and \ref{tab:mixed-scaling-d-timing} we see that the resources (memory and running time) scale well with \textbf{\#Views}. For example, consider the two rows in Table \ref{tab:mixed-scaling-d-timing} corresponding to $n=40, d=10$ and $n=40, d=40$. The number of attributes increases by a factor of 4 but the number of views increases from 175 to 10{,}700 --- a factor of $\approx 61\times$. Meanwhile, the solve time increases from 59.6 seconds to 3031.6 (a factor of $\approx 51\times$) and the memory requirements increase from 3.4\,GB to 227.4\,GB (a factor of $\approx 67\times$).

Table~\ref{tab:scaling-nd-timing} exhibits substantially lower resource usage because it only needs support solve 1-way and 2-way workloads. To get an idea of the necessary memory requirements, suppose $\ell$ queries in a workload can be answered using a marginal with $c$ total cells. A single linear query over that marginal is a linear combination of the values in the $c$ cells and hence is represented as a $c$-dimensional vector. Materializing the workload then requires $\ell c$ space (typically $\ell\geq c$). An optimal matrix mechanism for the part of the workload answerable by this marginal would generally ask for noisy answers to $c$ linear strategy queries, hence requiring $O(c^2)$ space. This is a fundamental bottleneck that is independent of \sysname. The only way around it would be to use sub-optimal matrix mechanisms that are restricted to strategy queries that have a sparse representation and/or to cache optimal solutions (so that marginals with the same shape can re-use the solution). The modular design of \sysname would allow optimal solvers to be replaced with sub-optimal solvers to save resources.

\begin{table}[htbp]
\centering
\caption{Timing experiments on 1-way + 2-way prefix queries as $d$ (number of attributes in the dataset) and $n$ (domain size of each attribute) vary. \#Views refers to the number of data views that would be needed to answer the queries even in the non-private setting.
} 
\label{tab:scaling-nd-timing}
\footnotesize
\resizebox{\columnwidth}{!}{%
\begin{tabular}{rrr rrr r r}
\hline
& & & \multicolumn{3}{c}{\sysname Phase Time} & & \\
\cline{4-6}
$n$ & $d$ &\#Views& Decomp & Solve & Assem & Total & Mem \\
\hline
10 & 10 &55& ${<}$0.1s & 2.8s & ${<}$0.1s & 2.8s & ${<}$1\,GB \\
   & 20 &210& ${<}$0.1s & 4.4s & ${<}$0.1s & 4.4s & ${<}$1\,GB \\
   & 30 &465& ${<}$0.1s & 2.6s & ${<}$0.1s & 2.6s & ${<}$1\,GB \\
   & 40 &820& 0.1s & 7.1s & ${<}$0.1s & 7.2s & ${<}$1\,GB \\
   & 50 &1275& 0.2s & 9.5s & 0.1s & 9.7s & ${<}$1\,GB \\
\hline
20 & 10 &55& ${<}$0.1s & 2.7s & ${<}$0.1s & 2.7s & ${<}$1\,GB \\
   & 20 &210& ${<}$0.1s & 4.2s & ${<}$0.1s & 4.3s & ${<}$1\,GB \\
   & 30 &465& ${<}$0.1s & 6.4s & ${<}$0.1s & 6.4s & ${<}$1\,GB \\
   & 40 &820& 0.1s & 8.1s & ${<}$0.1s & 8.2s & ${<}$1\,GB \\
   & 50 &1275& 0.1s & 10.1s & ${<}$0.1s & 10.2s & 1\,GB \\
\hline
30 & 10 &55& ${<}$0.1s & 3.6s & ${<}$0.1s & 3.6s & 1\,GB \\
   & 20 &210& ${<}$0.1s & 7.0s & ${<}$0.1s & 7.1s & 1\,GB \\
   & 30 &465& 0.1s & 9.9s & ${<}$0.1s & 10.0s & 1\,GB \\
   & 40 &820& 0.1s & 12.7s & ${<}$0.1s & 12.8s & 1\,GB \\
   & 50 &1275& 0.1s & 16.4s & 0.1s & 16.6s & 1\,GB \\
\hline
40 & 10 &55& ${<}$0.1s & 4.9s & ${<}$0.1s & 4.9s & 1\,GB \\
   & 20 &210& ${<}$0.1s & 9.3s & ${<}$0.1s & 9.3s & 1\,GB \\
   & 30 &465& 0.1s & 14.8s & ${<}$0.1s & 14.9s & 1\,GB \\
   & 40 &820& 0.1s & 20.5s & ${<}$0.1s & 20.6s & 1\,GB \\
   & 50 &1275& 0.2s & 37.5s & 0.1s & 37.7s & 2\,GB \\
\hline
\end{tabular}}
\end{table}

\begin{table}[htbp]
\centering
\caption{Timing experiments on mixed workloads containing
  all 1-way range queries, all 2-way affine queries, and all 3-way prefix sum queries,
  as $d$ (number of attributes in the dataset) and $n$ (domain size of each attribute) vary. \#Views refers to the number of data views that would be needed to answer the queries even in the non-private setting.}
\label{tab:mixed-scaling-d-timing}
\footnotesize
\resizebox{\columnwidth}{!}{%
\begin{tabular}{rrr rrr r r}
\hline
& && \multicolumn{3}{c}{\sysname Phase Time} & & \\
\cline{4-6}
$n$ & $d$ &\#Views& Decomp & Solve & Assem & Total & Mem \\
\hline
10 & 10 &175& ${<}$0.1s & 4.5s & ${<}$0.1s & 4.6s & 0.7\,GB \\
   & 20 &1350& 0.5s & 28.1s & 0.8s & 29.4s & 1.0\,GB \\
   & 30 &4525& 1.2s & 25.6s & 2.6s & 29.4s & 2.1\,GB \\
   & 40 &10700& 2.7s & 43.9s & 4.7s & 51.3s & 4.2\,GB \\
   & 50 &20875& 5.5s & 88.7s & 10.0s & 104.2s & 7.7\,GB \\
\hline
20 & 10 &175& 0.1s & 19.8s & 1.1s & 20.9s & 1.0\,GB \\
   & 20 &1350& 0.6s & 46.6s & 4.1s & 51.3s & 3.8\,GB \\
   & 30 &4525& 1.6s & 152.1s & 13.3s & 167.1s & 12.1\,GB \\
   & 40 &10700& 3.9s & 366.2s & 41.3s & 411.4s & 28.4\,GB \\
   & 50 &20875& 7.2s & 751.0s & 79.3s & 837.5s & 55.8\,GB \\
\hline
30 & 10 &175& 0.2s & 22.7s & 2.5s & 25.4s & 1.8\,GB \\
   & 20 &1350& 1.3s & 145.7s & 22.3s & 169.3s & 11.5\,GB \\
   & 30 &4525& 3.4s & 499.2s & 90.0s & 592.5s & 39.1\,GB \\
   & 40 &10700& 7.2s & 1174.0s & 202.7s & 1383.9s & 93.9\,GB \\
   & 50 &20875& 12.6s & 2531.9s & 429.7s & 2974.1s & 186.0\,GB \\
\hline
40 & 10 &175& 0.5s & 59.6s & 8.4s & 68.6s & 3.4\,GB \\
   & 20 &1350& 2.5s & 410.6s & 78.1s & 491.3s & 26.2\,GB \\
   & 30 &4525& 7.5s & 1326.4s & 313.8s & 1647.7s & 91.5\,GB \\
   & 40 &10700& 12.9s & 3031.6s & 650.2s & 3694.7s & 227.4\,GB \\
\hline
\end{tabular}}
\end{table}


\section{Conclusions and Future Work}\label{sec:conc}
In this paper, we introduced \sysname, a matrix mechanism that uses a divide-and-conquer strategy to
answer linear workloads  that are computable from collections of marginals. This
approach allows optimal matrix mechanism constructors to be re-used on small sub-problems. Future research includes improving optimal and approximate low-dimensional matrix mechanisms as the resource savings would automatically transfer to \sysname. Other areas of future  research include  integration  with MWEM style algorithms. Such an integration would combine their respective strengths:  the ability
to plan strategy queries  with data-dependent exploration
of which queries need more accuracy.

\paragraph*{Acknowledgments}
This work was partially supported by NSF awards CNS-2349610, CNS-2317232, and CNS-2317233

\bibliographystyle{ACM-Reference-Format}
\bibliography{refs}

@misc{lebeda2025weightedfourierfactorizationsoptimal,
      title={Weighted Fourier Factorizations: Optimal Gaussian Noise for Differentially Private Marginal and Product Queries}, 
      author={Christian Janos Lebeda and Aleksandar Nikolov and Haohua Tang},
      year={2025},
      eprint={2512.21499},
      archivePrefix={arXiv},
      primaryClass={cs.DS},
      url={https://arxiv.org/abs/2512.21499}, 
}

@article{McKenna_Miklau_Sheldon_2021, 
     title={Winning the NIST Contest: A scalable and general approach to differentially private synthetic data}, 
     volume={11}, 
     number={3}, 
     journal={Journal of Privacy and Confidentiality}, 
     author={McKenna, Ryan and Miklau, Gerome and Sheldon, Daniel}, 
     year={2021}, 
     }

@article{privbayes,
author = {Zhang, Jun and Cormode, Graham and Procopiuc, Cecilia M. and Srivastava, Divesh and Xiao, Xiaokui},
title = {PrivBayes: Private Data Release via Bayesian Networks},
year = {2017},
issue_date = {December 2017},
publisher = {Association for Computing Machinery},
address = {New York, NY, USA},
volume = {42},
number = {4},
journal = {ACM Trans. Database Syst.},
month = {oct},
articleno = {25},
numpages = {41}
}

@inproceedings{aydore2021differentially,
  title={Differentially private query release through adaptive projection},
  author={Aydore, Sergul and Brown, William and Kearns, Michael and Kenthapadi, Krishnaram and Melis, Luca and Roth, Aaron and Siva, Ankit A},
  booktitle={International Conference on Machine Learning},
  pages={457--467},
  year={2021},
  organization={PMLR}
}

@inproceedings{liu2021leveraging,
  title={Leveraging public data for practical private query release},
  author={Liu, Terrance and Vietri, Giuseppe and Steinke, Thomas and Ullman, Jonathan and Wu, Steven},
  booktitle={International Conference on Machine Learning},
  pages={6968--6977},
  year={2021},
  organization={PMLR},
}

@article{hardt2012simple,
  title={A simple and practical algorithm for differentially private data release},
  author={Hardt, Moritz and Ligett, Katrina and McSherry, Frank},
  journal={Advances in neural information processing systems},
  volume={25},
  year={2012}
}

@inproceedings{zhang2018ektelo,
  title={Ektelo: A framework for defining differentially-private computations},
  author={Zhang, Dan and McKenna, Ryan and Kotsogiannis, Ios and Hay, Michael and Machanavajjhala, Ashwin and Miklau, Gerome},
  booktitle={Proceedings of the 2018 International Conference on Management of Data},
  pages={115--130},
  year={2018}
}

@inproceedings{aim,
author = {McKenna, Ryan and Mullins, Brett and Sheldon, Daniel and Miklau, Gerome},
title = {AIM: an adaptive and iterative mechanism for differentially private synthetic data},
year = {2022},
booktitle = {VLDB},
}

@misc{acsdata,
   author = {{U. S. Census Bureau}},
   title = {American Community Survey Data},
   howpublished={\url{https://www.census.gov/programs-surveys/acs/data/summary-file.html}},
}

@misc{census2020,
   author = {{U. S. Census Bureau}},
   title = {Decennial Census Data Products},
   year = {2020},
   howpublished = {\url{https://www.census.gov/programs-surveys/decennial-census/data-products.2020.html}},
}

@inproceedings{dinur:privacy,
author = "Irit Dinur and Kobbi Nissim",
title = "Revealing information while preserving privacy",
booktitle = "PODS",
year = {2003},
}

@article{dawa,
author = {Li, Chao and Hay, Michael and Miklau, Gerome and Wang, Yue},
title = {A data- and workload-aware algorithm for range queries under differential privacy},
year = {2014},
journal = {Proc. VLDB Endow.},
}

@article{mullins2024efficient,
  title={Efficient and private marginal reconstruction with local non-negativity},
  author={Mullins, Brett and Fuentes, Miguel and Xiao, Yingtai and Kifer, Daniel and Musco, Cameron and Sheldon, Daniel R},
  journal={Advances in Neural Information Processing Systems},
  volume={37},
  pages={141356--141389},
  year={2024}
}

@inproceedings{mckenna2025scaling,
title={Scaling up the Banded Matrix Factorization Mechanism for Large Scale Differentially Private {ML}},
author={Ryan McKenna},
booktitle={The Thirteenth International Conference on Learning Representations},
year={2025},
url={https://openreview.net/forum?id=69Fp4dcmJN}
}

@inproceedings{fuentes2026fast,
title={Fast Private Adaptive Query Answering for Large Data Domains},
author={Miguel Fuentes and Brett Mullins and Yingtai Xiao and Daniel Kifer and Cameron N Musco and Daniel Sheldon},
booktitle={The 29th International Conference on Artificial Intelligence and Statistics},
year={2026},
url={https://openreview.net/forum?id=wkYTC6w2pK}
}

@article{hbtree,
 author = {Qardaji, Wahbeh and Yang, Weining and Li, Ninghui},
 title = {Understanding Hierarchical Methods for Differentially Private Histograms},
 journal = {Proc. VLDB Endow.},
 issue_date = {September 2013},
 volume = {6},
 number = {14},
 year = {2013},
}

@inproceedings{YYZH16,
author = {Yuan, Ganzhao and Yang, Yin and Zhang, Zhenjie and Hao, Zhifeng},
title = {Convex Optimization for Linear Query Processing under Approximate Differential Privacy},
year = {2016},
booktitle = {Proceedings of the 22nd ACM SIGKDD International Conference on Knowledge Discovery and Data Mining},
}

@article{YZWXYH12,
author = {Yuan, Ganzhao and Zhang, Zhenjie and Winslett, Marianne and Xiao, Xiaokui and Yang, Yin and Hao, Zhifeng},
title = {Low-Rank Mechanism: Optimizing Batch Queries under Differential Privacy},
year = {2012},
issue_date = {July 2012},
publisher = {VLDB Endowment},
volume = {5},
number = {11},
issn = {2150-8097},
url = {https://doi.org/10.14778/2350229.2350252},
doi = {10.14778/2350229.2350252},
journal = {Proc. VLDB Endow.},
month = jul,
pages = {1352–1363},
numpages = {12},
}

@article{privelet,
author = {Xiao, Xiaokui and Wang, Guozhang and Gehrke, Johannes},
title = {Differential Privacy via Wavelet Transforms},
year = {2011},
journal = {IEEE Transactions on Knowledge and Data Engineering},
volume = {23},
number = {8},
pages = {1200–1214},
}

@article{LMHMR15,
author = {Li, Chao and Miklau, Gerome and Hay, Michael and Mcgregor, Andrew and Rastogi, Vibhor},
title = {The Matrix Mechanism: Optimizing Linear Counting Queries under Differential Privacy},
year = {2015},
issue_date = {December  2015},
publisher = {Springer-Verlag},
address = {Berlin, Heidelberg},
volume = {24},
number = {6},
issn = {1066-8888},
url = {https://doi.org/10.1007/s00778-015-0398-x},
doi = {10.1007/s00778-015-0398-x},
journal = {The VLDB Journal},
month = dec,
pages = {757–781},
numpages = {25}
}

@inproceedings{zcdp,
 author = {Bun, Mark and Steinke, Thomas},
 title = {Concentrated Differential Privacy: Simplifications, Extensions, and Lower Bounds},
 booktitle = {Proceedings, Part I, of the 14th International Conference on Theory of Cryptography - Volume 9985},
 year = {2016},
}

@article{commonmech,
author = {Xiao, Yingtai and Wang, Guanhong and Zhang, Danfeng and Kifer, Daniel},
title = {Answering Private Linear Queries Adaptively Using the Common Mechanism},
year = {2023},
issue_date = {April 2023},
publisher = {VLDB Endowment},
volume = {16},
number = {8},
issn = {2150-8097},
url = {https://doi.org/10.14778/3594512.3594519},
doi = {10.14778/3594512.3594519},
journal = {Proc. VLDB Endow.},
month = apr,
pages = {1883–1896},
numpages = {14}
}

@inproceedings{renyidp,
  author    = {Ilya Mironov},
  title     = {R{\'{e}}nyi Differential Privacy},
  booktitle = {30th {IEEE} Computer Security Foundations Symposium, {CSF} 2017, Santa
               Barbara, CA, USA, August 21-25, 2017},
  pages     = {263--275},
  year      = {2017},
 }

@inproceedings{dwork06Calibrating,
  author    = {Cynthia Dwork and
               Frank McSherry and
               Kobbi Nissim and
               Adam Smith},
  title     = {Calibrating Noise to Sensitivity in Private Data Analysis.},
  booktitle = {TCC},
  year      = {2006},
}

@INPROCEEDINGS{dworkKMM06:ourdata,
 AUTHOR = {Cynthia Dwork and Krishnaram Kenthapadi and Frank McSherry and Ilya Mironov and Moni Naor},
 TITLE = {Our Data, Ourselves: Privacy via Distributed Noise Generation},
 BOOKTITLE = {EUROCRYPT},
 YEAR = {2006},
 PAGES = {486--503},
}

@article{tdahdsr,
  title = {The 2020 Census Disclosure Avoidance System TopDown Algorithm},
  author = {John M. Abowd and Robert Ashmead and  Ryan Cumings-Menon and Simson Garfinkel and
Micah Heineck and  Christine Heiss and Robert Johns and Daniel Kifer and Philip Leclerc and Ashwin Machanavajjhala and Brett Moran and  William Sexton and 
Matthew Spence and Pavel Zhuravlev},
  journal = {Harvard Data Science Review},
  year = {forthcoming. Preprint \url{https://www.census.gov/library/working-papers/2022/adrm/CED-WP-2022-002.html}},
}

@article{fdp,
author = {Dong, Jinshuo and Roth, Aaron and Su, Weijie J.},
title = {Gaussian differential privacy},
journal = {Journal of the Royal Statistical Society: Series B (Statistical Methodology)},
volume = {84},
number = {1},
pages = {3-37},
doi = {https://doi.org/10.1111/rssb.12454},
url = {https://rss.onlinelibrary.wiley.com/doi/abs/10.1111/rssb.12454},
eprint = {https://rss.onlinelibrary.wiley.com/doi/pdf/10.1111/rssb.12454},
year = {2022}
}

@inproceedings{gaboardi2014dual,
  title={Dual query: Practical private query release for high dimensional data},
  author={Gaboardi, Marco and Arias, Emilio Jes{\'u}s Gallego and Hsu, Justin and Roth, Aaron and Wu, Zhiwei Steven},
  booktitle={International Conference on Machine Learning},
  pages={1170--1178},
  year={2014},
  organization={PMLR}
}

@article{hay2009boosting,
  title={Boosting the Accuracy of Differentially Private Histograms Through Consistency},
  author={Hay, Michael and Rastogi, Vibhor and Miklau, Gerome and Suciu, Dan},
  journal={Proceedings of the VLDB Endowment},
  volume={3},
  number={1},
  year={2010}
}

@misc{qcewdata,
  title = {{QCEW} Public-Use Data Files},
  author = {{U.S. Bureau of Labor Statistics}},
  howpublished={\url{https://www.bls.gov/cew/downloadable-data-files.htm}},
year={(retrieved August 21, 2024)}
}

@article{mckenna2018optimizing,
  title={Optimizing error of high-dimensional statistical queries under differential privacy},
  author={McKenna, Ryan and Miklau, Gerome and Hay, Michael and Machanavajjhala, Ashwin},
  journal={Proceedings of the VLDB Endowment},
  volume={11},
  number={10},
  year={2018}
}

@inproceedings{rplanner,
author = {Xiao, Yingtai and He, Guanlin and Zhang, Danfeng and Kifer, Daniel},
title = {An optimal and scalable matrix mechanism for noisy marginals under convex loss functions},
year = {2023},
publisher = {Curran Associates Inc.},
address = {Red Hook, NY, USA},
booktitle = {Proceedings of the 37th International Conference on Neural Information Processing Systems},
articleno = {901},
numpages = {45},
location = {New Orleans, LA, USA},
series = {NeurIPS},
}

@misc{rplus,
      title={ResidualPlanner+: a scalable matrix mechanism for marginals and beyond}, 
      author={Yingtai Xiao and Guanlin He and Levent Toksoz and Zeyu Ding and Danfeng Zhang and Daniel Kifer},
      year={2025},
      eprint={2305.08175v3},
      archivePrefix={arXiv},
      primaryClass={cs.DB},
      url={https://arxiv.org/abs/2305.08175v3}, 
}

@article{yuan2015optimizing,
  title={Optimizing batch linear queries under exact and approximate differential privacy},
  author={Yuan, Ganzhao and Zhang, Zhenjie and Winslett, Marianne and Xiao, Xiaokui and Yang, Yin and Hao, Zhifeng},
  journal={ACM Transactions on Database Systems (TODS)},
  volume={40},
  number={2},
  pages={1--47},
  year={2015},
  publisher={ACM New York, NY, USA}
}

@inproceedings{edmonds2020power,
  title={The power of factorization mechanisms in local and central differential privacy},
  author={Edmonds, Alexander and Nikolov, Aleksandar and Ullman, Jonathan},
  booktitle={Proceedings of the 52nd Annual ACM SIGACT Symposium on Theory of Computing},
  pages={425--438},
  year={2020}
}

@phdthesis{nikolov2014new,
  title={New computational aspects of discrepancy theory},
  author={Nikolov, Aleksandar},
  year={2014},
  school={Rutgers University-Graduate School-New Brunswick}
}

@inproceedings{xiao2020optimizing,
  title={Optimizing fitness-for-use of differentially private linear queries},
  author={Xiao, Yingtai and Ding, Zeyu and Wang, Yuxin and Zhang, Danfeng and Kifer, Daniel},
  booktitle={VLDB},
  year={2021}
}

@misc{kaggleloans,
  author = "Kaggle",
  title = "Loan Prediction Problem Dataset",
  howpublished = "\url{https://www.kaggle.com/altruistdelhite04/loan-prediction-problem-dataset}",
  year = "2021",
  note = "Accessed: May 8th, 2023"
}

@misc{qsarxiv,
   year = {2026},
   title = {Accurate and Scalable Matrix Mechanisms via Divide and Conquer},
   author = {Guanlin He and Yingtai Xiao and Jiamu Bai and Xin Gu and Zeyu Ding  and Wenpeng Yin and Daniel Kifer},
   howpublished = {Arxiv},
}

@misc{adult_2,
  author       = {Becker, Barry and Kohavi, Ronny},
  title        = {{Adult}},
  year         = {1996},
  howpublished = {UCI Machine Learning Repository},
  note         = {{DOI}: https://doi.org/10.24432/C5XW20}
}

\ifarxiv
\clearpage
\appendix

\section{Proofs from Section {\protect \ref{sec:decomp}}}\label{app:decomposition}
\printProofs[decomp]

\section{Proofs from Section {\protect \ref{sec:accuracy}}}\label{app:accuracy}
\printProofs[accuracy]

\section{Integration of Fourier Basis into {\protect \sysname}}\label{app:fourier}
In this section, we explain how to integrate the 
 Fourier basis \cite{lebeda2025weightedfourierfactorizationsoptimal} into \sysname 
 so that it can be used as a fast, but approximate, method to
 optimize a subworkload $\workload_{\deco\attset^\prime}$. If the Fourier basis is used for all subworkloads, the result becomes equivalent to
 WFF.
 
Following the ideas of Lebeda et al. \cite{lebeda2025weightedfourierfactorizationsoptimal}, the high level idea is to 
 take the multidimensional Fourier transform of the tensor view of a marginal $\marginal_{\attset^\prime}$, add different amounts
 of noise to each Fourier coefficient, and then take the inverse (multidimensional) Fourier transform
 of the noisy coefficients. This results in a privacy-preserving marginal $\widehat{\marginal}_{\attset^\prime}$
 that can be used to directly answer queries in the subworkload. Thus, the goal is to determine how much noise different coefficients need.

\subsubsection*{Expressing query variances in terms of tunable parameters.}
A marginal  ${\marginal}_{\attset^\prime}$, where $\attset^\prime=\{\attr_{i_1},\dots,\attr_{i_\ell}\}$, can be indexed using a tuple of $(j_1,\dots, j_\ell)$ of $\ell$ indexes (one for each dimension), where $j_1$ range from $0$ to $\attsize_{i_1}-1$ (inclusive), the index $j_2$ ranges from 0 to $\attsize_{i_2}-1$, etc. The parameter set $\theta$ can be thought of as a similarly sized table, but only some entries correspond to actual parameters. Specifically, $\theta[j_1,\dots, j_\ell]$ is a tunable parameter if $j_1 \in [1,\dots, \attsize_{i_1}-1], \dots, j_\ell\in [1,\dots,\attsize_{i_\ell}-1]$ and the tuple $(j_1,\dots, j_\ell)$ is $\leq$ than the tuple $(\attsize_{i_1}-j_1, \attsize_{i_2}-j_2, \dots, \attsize_{i_\ell}-j_\ell)$ in dictionary order.

Given a query $\query\in \workload_{\deco\attset^\prime}$ expressed in tensor form, its data-independent variance in terms of the tunable parameters can be computed as follows:
\begin{enumerate}
    \item Let $\widetilde{F}$ be the inverse multidimensional discrete Fourier transform of $\query$.
    \item The variance is a linear combination of the parameters in $\theta$, where the coefficient of the parameter $\theta[j_1,\dots, j_\ell]$ is
    $||\widetilde{F}[j_1,\dots, j_\ell]||^2$ if  $(j_1,\dots, j_\ell) =(\attsize_{i_1}-j_1, \attsize_{i_2}-j_2, \dots, \attsize_{i_\ell}-j_\ell)$ and
       $ 4 ||\widetilde{F}[j_1,\dots, j_\ell]||^2$otherwise.
\end{enumerate}
Furthermore, if $\query^\prime$ is any circular-shift of $\query$, then they have the same variance.

\subsubsection*{Setting the tunable parameters.}
The weighted sum of variances of queries in $\workload_{\deco\attset^\prime}$ can be computed by multiplying each query's variance by the query's weight and adding them up. Thus,
the weighted sum of variances is a linear combination of the parameters in $\theta$. Let $c_{j_1,\dots, j_\ell}$ be the coefficient of parameter $\theta[j_1,\dots, j_\ell]$. Let $\gamma$ be the sum of the square roots of the linear coefficients. Then the value of
parameter is  $\theta[j_1,\dots, j_\ell]=\gamma/\sqrt{c_{j_1,\dots, j_\ell}}$.

\subsubsection*{How to run the mechanism.} Setting the values for the parameter set $\theta$ defines a mechanism. This mechanism can be run as follows:
\begin{enumerate}
    \item Create a tensor $Z$ with entries initialized to 0.
    \item Let $F$ be the multidimensional discrete Fourier transform of $\marginal_{\attset^\prime}$.
    \item For each valid parameter index $(j_1,\dots, j_\ell)$, let $\sigma^2=\theta[j_1,\dots, j_\ell]$ and let $a+b\imag=F[j_1,\dots, j_\ell]$.
    \begin{enumerate}
        \item Add noise: $\widetilde{a}=a + N(0, \sigma^2)$ and $\widetilde{b} = b + N(0, \sigma^2)$.
        \item Store: set $Z[j_1, \dots, j_\ell]= \widetilde{a} + \widetilde{b}\imag$ and $Z[\attsize_{i_1}-j_1, \dots, \attsize_{i_\ell}-j_\ell]= \widetilde{a} - \widetilde{b}\imag$.
    \end{enumerate}
    \item Set $\widetilde{\marginal}_{\attset^\prime}$ to be the inverse multidimensional discrete Fourier transform of $Z$ and return $\widetilde{\marginal}_{\attset^\prime}$.
\end{enumerate}

\subsubsection*{How to reconstruct a query answer.}
Any query $\query\in\workload_{\deco\attset^\prime}$ can be directly answered from  $\widetilde{\marginal}_{\attset^\prime}$ in the usual way: perform element-wise multiplication of the tensor view of $\query$ with $\widetilde{\marginal}_{\attset^\prime}$ and then add up all the entries.

\section{Variations of the Matrix Mechanism Optimization Problem}\label{app:variations}

We present additional variations of the matrix mechanism construction optimization problem from Section  \ref{sec:solvers}, when the goal is to minimize weighted sum squared error.
Let $\mat{W}$ be the matrix whose rows are the query vectors in $\workload_{\deco\attset^\prime}$.
Let $\mat{D}$ be a diagonal matrix, where entry $\mat{D}[i,i]$ is the weight of the query that is in the $i^\text{th}$ row of $\mat{W}$

Suppose we have an independent linear basis $\mat{U}$ whose row span is equal to the row span of $\mat{W}$ and $\mat{U}^\dagger$ is the pseudoinverse of $\mat{U}$. Then one can solve the following optimization problem:

\begin{align*}
    \arg\min_{\mat{V}}\quad&\text{trace}((\mat{U}^\dagger)^\top\mat{W}^\top\mat{D}\mat{W}\mat{U}^\dagger\mat{V}^{-1})\\
    &\text{s.t. }\mat{V} \text{ is symmetric and invertible }\\
    &\text{and }(\mat{U}^\top\mat{V}\mat{U})[i,i]\leq 1 \text{ for all }i
\end{align*}
This is a semidefinite program that can be solved using off-the-shelf optimizers.
Once a solution is obtained, $\bmat$ can be computed by taking the Cholesky decomposition $\mat{V}=\mat{L}^\top\mat{L}$ and setting $\bmat=\mat{L}\mat{U}$. The rest is the same as in Section \ref{sec:solvers}.

Some solvers for the optimization  problem in Section \ref{sec:solvers} require $\mat{W}$ to be full-rank \cite{yuan2015optimizing}. However, the $\mat{W}$ we get from a workload partition $\workload_{\deco\attset^\prime}$ is typically not full rank. For example, in Figure \ref{fig:decomposition_post} the query vectors all have entries that add up to 0. In the tensor view of each query, the sum along any dimension is clearly 0 (this is a direct consequence of the query decomposition construction).
In fact, the rows of the query matrix $\mat{W}$ for the workload $\workload_{\deco\attset^\prime}$, where $\attset^\prime=\{\attr_{i_1},\dots,\attr_{i_\ell}\}$, 
always belong to the subspace spanned by the rows of 
\begin{align}
   \left(\identity_{\attsize_{i_1}} - \frac{1}{\attsize_{i_1}}\one_{\attsize_{i_1}}\one_{\attsize_{i_1}}^T\right)
   \kron \cdots \kron 
      \left(\identity_{\attsize_{i_\ell}} - \frac{1}{\attsize_{i_\ell}}\one_{\attsize_{i_\ell}}\one_{\attsize_{i_\ell}}^T\right)\label{eqn:queryspace}
\end{align}
To sidestep this issue of a non-full-rank matrix, \sysname can be modified in the following way:
\begin{enumerate}[leftmargin=*]
\item If \sysname is being used with a matrix mechanism constructor that needs a full rank query workload, one can change how  $\workload_{\deco\attset^\prime}$ is constructed. Given a query that is answerable from a marginal on $\attset$  (with $\attset^\prime\subseteq \attset$), the decomposition in Equation \ref{eq:vecdecomp} would use the matrices $\identity_{\attsize_{i_j}}$ instead of $\identity_{\attsize_{i_j}} - \frac{1}{\attsize_{i_j}}\one_{\attsize_{i_j}}\one_{\attsize_{i_j}}^T$.
\item After the convex matrix mechanism constructor has returned a $\bmat_{\text{tmp}}$, we project its rows onto the subspace spanned by the rows of Equation \ref{eqn:queryspace} by setting $\bmat^\prime$ to be $\bmat_{\text{tmp}}$ times the matrix in Equation \ref{eqn:queryspace}.
\item The rest is the same as in Section \ref{sec:solvers}.
\item The reconstruction phase of Algorithm \ref{alg:smasher} uses the original query decomposition.
\end{enumerate}


\section{Evaluation on Weighted Queries}\label{app:weighted}
\vspace{0.5em}\noindent\textbf{Workload weights.}
We also experiment with \textbf{weighted} workloads to see if the results qualitatively change.
 For weighted workloads, we randomly sample an integer weight
for each query from 1 to 5, to simulate the situation where some queries are more important than others.
The weighted RMSE is defined as follows:

\begin{align*}
    WRMSE &=\sqrt{\frac{\sum\limits_{\queryv\in\workload}\weight(\queryv)\varfun(\queryv)}{|\workload|}}
\end{align*}

Table \ref{tab:2way-varying-n_weighted} shows the results. Qualitatively, little changes from Table \ref{tab:2way-varying-n},
indicating that the latest scalable matrix mechanisms may be robust to weighted queries.

\begin{table}[htbp]
\centering
\caption{WRMSE at privacy cost 1 of weighted 1-way and 2-way marginal workload as $d$ (dataset dimensions) and $n$ (domain size of each attribute) vary. on $[n]^{40}$, varying domain size~$n$.
  The RP+ and HDMM implementations do not support these queries.}
\label{tab:2way-varying-n_weighted}
\small
\begin{tabular}{r l r rr}
\hline
$n$ & Workload & \#Queries & WFF & QS \\
\hline
10 & marginal    & 78{,}400 & \textbf{39.22} & \textbf{39.22} \\
   & prefix      & 78{,}400 & 66.58 & \textbf{56.59} \\
   & range       & 2{,}361{,}700 & 69.25 & \textbf{68.82} \\
   & circular & 7{,}804{,}000 & \textbf{66.70} & \textbf{66.70} \\
   & affine      & 15{,}220 & 75.74 & \textbf{47.33} \\
   & abs   & 8{,}200 & 107.17 & \textbf{51.94} \\
   & random      & 235{,}200 & 180.03 & \textbf{174.49} \\
\hline
20 & marginal    & 312{,}800 & \textbf{42.90} & \textbf{42.90} \\
   & prefix      & 312{,}800 & 105.43 & \textbf{84.18} \\
   & range       & 34{,}406{,}400 & 106.37 & \textbf{105.93} \\
   & circular & 124{,}816{,}000 & \textbf{105.53} & \textbf{105.53} \\
   & affine      & 31{,}220 & 117.65 & \textbf{61.68} \\
   & abs   & 16{,}400 & 171.30 & \textbf{66.89} \\
   & random      & 938{,}400 & 391.96 & \textbf{386.32} \\
\hline
30 & marginal    & 703{,}200 & \textbf{44.16} & \textbf{44.16} \\
   & prefix      & 703{,}200 & 132.39 & \textbf{101.95} \\
   & range       & 168{,}674{,}100 & 132.15 & \textbf{131.76} \\
   & circular & 631{,}836{,}000 & \textbf{132.48} & \textbf{132.48} \\
   & affine      & 47{,}220 & 146.56 & \textbf{102.25} \\
   & abs   & 24{,}600 & 215.96 & \textbf{106.11} \\
   & random      & 2{,}109{,}600 & \multicolumn{2}{c}{\emph{not finished}} \\
\hline
40 & marginal    & 1{,}249{,}600 & \textbf{44.80} & \textbf{44.80} \\
   & prefix      & 1{,}249{,}600 & 153.42 & \textbf{115.07} \\
   & range       & 524{,}504{,}800 & 152.33 & \textbf{152.00} \\
   & circular & 1{,}996{,}864{,}000 & \textbf{153.50} & \textbf{153.50} \\
   & affine      & 63{,}220 & 169.06 & \textbf{139.22} \\
   & abs   & 32{,}800 & 250.83 & \textbf{183.22} \\
   & random      & 3{,}748{,}800 & 818.49 & \textbf{817.66} \\
\end{tabular}
\end{table}

\fi
\end{document}